\numberwithin{equation}{section}
\newtheorem{Theorem}{Theorem}[section]
\newtheorem{Lemma}[Theorem]{Lemma}
\newtheorem{Proposition}[Theorem]{Proposition}
 { \theoremstyle{definition}
\newtheorem{Definition}[Theorem]{Definition}

 }
\def\Z{\mathbb{Z}}
\def\n{\boldsymbol{n}}
\begin{document}

\allowdisplaybreaks

\newcommand{\arXivNumber}{1804.02804}

\renewcommand{\PaperNumber}{065}

\FirstPageHeading

\ShortArticleName{On the Coprimeness Property of Discrete Systems without the Irreducibility Condition}

\ArticleName{On the Coprimeness Property of Discrete Systems\\ without the Irreducibility Condition}

\Author{Masataka KANKI~$^\dag$ and Takafumi MASE~$^\ddag$ and Tetsuji TOKIHIRO~$^\ddag$}

\AuthorNameForHeading{M.~Kanki and T.~Mase and T.~Tokihiro}

\Address{$^\dag$~Department of Mathematics, Kansai University, Japan}
\EmailD{\href{mailto:kanki@kansai-u.ac.jp}{kanki@kansai-u.ac.jp}}

\Address{$^\ddag$~Graduate School of Mathematical Sciences, University of Tokyo, Japan}
\EmailD{\href{mailto:mase@ms.u-tokyo.ac.jp}{mase@ms.u-tokyo.ac.jp}, \href{mailto:toki@ms.u-tokyo.ac.jp}{toki@ms.u-tokyo.ac.jp}}

\ArticleDates{Received April 10, 2018, in final form June 21, 2018; Published online June 27, 2018}

\Abstract{In this article we investigate the coprimeness properties of one and two-dimensional discrete equations, in a situation where the equations are decomposable into several factors of polynomials. After experimenting on a simple equation, we shall focus on some higher power extensions of the Somos-$4$ equation and the (1-dimensional) discrete Toda equation. Our previous results are that all of the equations satisfy the irreducibility and the coprimeness properties if the r.h.s.\ is not factorizable. In this paper we shall prove that the coprimeness property still holds for all of these equations even if the r.h.s.\ is factorizable, although the irreducibility property is no longer satisfied.}

\Keywords{integrability detector; coprimeness; singularity confinement; discrete Toda equation}

\Classification{37K10}

\section{Introduction}
Continuous equations have several established definitions of integrability, e.g., the Frobenius complete integrability for Pfaffian systems, the Liouville--Arnold integrability for Hamiltonian systems. Moreover they possess several useful integrability detectors such as the Painlev\'{e} test, the existence of a Lax representation and the solvability via the inverse scattering method. It is a~natural question to ask whether these schemes apply to discrete equations. Indeed the discrete equations also admit several definitions of integrability for particular types of maps, e.g., an analogue of the Arnold--Liouville integrability for symplectic maps is proposed in~\cite{BRSTZ}. There are also the notions of the multidimensional consistency, the Lax integrability, the Darboux integrability and so on.

Let us review the properties closely related to the integrability of fully-discrete equations. The first discrete integrability detector was the singularity confinement~\cite{SC}, which was proposed as an analogue of the Painlev\'{e} test for ordinary differential equations. A discrete mapping is said to pass the singularity confinement test if an indeterminacy is resolved and the information on the initial values are recovered after a finite number of iterations. The test was successfully applied to construct several discrete Painlev\'{e} equations~\cite{DP} as nonautonomous extensions to integrable discrete mappings. However, it was later discovered that some mappings are not necessarily integrable in the sense of exponential degree growth and the chaotic behaviour of the orbits even if they pass the singularity confinement test~\cite{HV}. It was proposed that the degree growth of the iterates is closely related to the integrability of a discrete equation. The algebraic entropy criterion asserts that, if the growth is exponential (in which case the entropy is positive) then the equation is nonintegrable, while if the growth is of polynomial order (in which case the entropy is equal to zero) then it is integrable~\cite{BV}. Lots of works are done using the algebraic entropy, e.g., a class of two-dimensional lattice models are classified using the entropy in~\cite{HV2} and the growth property is extended to semi-discrete equations~\cite{DV}. Both the singularity confinement and the algebraic entropy have played important roles in studying discrete mappings. It has been a major challenge to overcome several minor but important differences between these two properties.

In this paper we shall employ the zero algebraic entropy criterion as the {\em definition} of the integrability of one-dimensional fully-discrete systems. For equations over a~higher-dimensional lattice, they are defined to be integrable if they have polynomial degree growth of the iterates on the lattice of definition.

Recently a new type of condition related to the discrete integrability has been proposed by the authors to further investigate the singularities of equations in terms of the factorization of each iterate. It is called the coprimeness property and is defined over the field of rational functions of the intial variables~\cite{dKdVSC1}. The coprimeness property is one type of singularity analysis of a discrete equation, which is quite similar to the singularity confinement test and is proved to be satisfied for many of the known discrete integrable systems~\cite{KKMT, dKdVSC2}. The Laurent and the irreducibility properties played important roles in proving the coprimeness property of the given equations. An equation is said to have the Laurent property if every iterate is a Laurent polynomial of the initial variables~\cite{FZ}. Moreover the equation has the irreducibility if every iterate is irreducible as a Laurent polynomial. A Laurent system is defined to have the coprimeness property if every pair of iterates is mutually coprime as Laurent polynomials of the initial variables. The tau-function or its analogue of many discrete integrable systems have these properties~\cite{maserims}. The Laurent property and the degree growth of the iterates are discussed in detail with emphasis on the bilinear forms by one of the authors in~\cite{mase}. However, it is {\em not} our intention to assert that the Laurent property and the coprimeness are integrability criteria, even though these two properties seem to be closely related to the integrability. In fact, it is known that there are many non-integrable Laurent recurrences, one of which we will see later, and moreover, some non-integrable equations have the coprimeness property and can be transformed to Laurent systems~\cite{exhv,2d}. It is worth noting that, since the coprimeness is based on the cancellation of factors, this property can be of help in calculating the algebraic entropy of an equation~\cite{exhv}. We also note that the coprimeness and the irreducibility in themselves are not at all trivial and have drawn an attention of researchers in various areas, e.g., the irreducibility and the coprimeness of the so-called Cauchy--Liouville--Mirimanoff polynomials have a long history~\cite{Beukers, Mirimanoff}.

Let us introduce several approaches to the singularities of the discrete equations related to the coprimeness and the Laurent property. A new property related to the discrete integrability called the Devron property is proposed in \cite{Glick}, whose definition is related to the anti-confined singularities \cite{anticonfinement}. The notion of strong $\tau$-sequence is based on the irreducibility and the coprimeness of Laurent systems~\cite{GP}. An observation on the integrability using the factorization of each iterate is given in~\cite{Viallet}. A~similar approach to the singularities of an equation in terms of the Laurent property using the recursive factorization is found in~\cite{HHKQ, HK}.

At present, one of the difficulties of the coprimeness property is that its proof is too technical in most equations, and we needed to first prove the irreducibility and then attack the coprimeness using the tau-function form and its analogues. It has been a big problem to deal with equations without irreducibility, i.e., when the equation itself decomposes into several factors. As one of the simplest examples we introduce the following recurrence:
\begin{gather}
y_n=\frac{y_{n-1}^r+1}{y_{n-2}}, \label{yrecurrence}
\end{gather}
where $r\ge 2$ is an integer parameter. When $r=2$ it is linearizable and is integrable in the sense of linear degree growth, while, when $r\ge 3$, it is nonintegrable in the sense of exponential degree growth and thus it has positive algebraic entropy~\cite{mase}.

The aim of this paper is to provide new techniques to deeply investigate the coprimeness property, and to provide a proof of the coprimeness that does not depend on the irreducibility property by studying several concrete examples. By following the number of factors in each iterate we shall refine the discussion used to prove the coprimeness for the irreducible equations to the factorizable case. This paper is organized as follows: In Section~\ref{section2}, we explain our new tools, using our example~\eqref{yrecurrence}. In Section~\ref{section3}, equations defined over a~higher-dimensional lattice are studied. In particular we study the coprimeness of the generalized one-dimensional discrete Toda lattice equation, when the equation itself is factorizable. Finally we state without proof the coprimeness of several discrete equations.

\section{Coprimeness-preserving recurrence without the irreducibility}\label{section2}

First, let us study one of the simplest examples of the recurrences that have the coprimeness property but does not satisfy the irreducibility.
The recurrence we study is \eqref{yrecurrence}, where~$y_0$ and~$y_1$ are the initial variables and the parameter $r$ is an integer greater than one. It is known that the equation~\eqref{yrecurrence} has the Laurent property, i.e.,
\begin{gather*}
y_n\in I:=\mathbb{Z}\big[y_0^{\pm}, y_1^{\pm}\big],
\end{gather*}
for all $n\ge 2$ \cite{FZ, mase}. On the other hand, $y_n\in I$ is not irreducible in general: for example~$y_2$ is reducible unless $r=2^m$, $m=0,1,2,\dots$. If we consider this equation on $\mathbb{C}$, $y_2$ always decomposes as
\begin{gather*}
y_2=\frac{1}{y_0}\prod_{j=1}^r \big(y_1-\zeta^{2j-1}\big),
\end{gather*}
where $\zeta=\exp \big(\sqrt{-1}\pi/r\big)$. We shall study the coprimeness property and the factorization of the numerator of~$y_n$ of~\eqref{yrecurrence}. Let us decompose $y_n$ as $y_n=p_n/q_n$ where $p_n$ is a polynomial, $q_n$ is a~monic monomial, and $p_n$ and $q_n$ do not have common factors. First several terms are
\begin{gather*}
p_0=y_0, \qquad q_0=q_1=1,\qquad p_1=y_1,\qquad p_2=y_1^r+1,\qquad q_2=y_0,\\ p_3=\big(y_1^r+1\big)^r+y_0^r,\qquad q_3=y_0^r y_1.
\end{gather*}
It is proved in \cite{mase} that
\begin{gather}
p_n=\frac{p_{n-1}^r+q_{n-1}^r}{p_{n-2}},\qquad q_n=\frac{q_{n-1}^r}{q_{n-2}},\qquad n\ge 4. \label{pnqnform}
\end{gather}
Let us show one lemma:
\begin{Lemma} \label{deglemma1} For $n\ge 2$ we have
\begin{gather*}
\deg p_{n+1}>\deg p_n>\deg q_n>0,
\end{gather*}
and $p_n\big|_{\{y_0=y_1=0\}}=1$. Moreover $q_{n+1}$ is divisible by $q_n$.
\end{Lemma}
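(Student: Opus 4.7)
The plan is to prove all four assertions simultaneously by induction on $n$, with the base cases $n=2,3$ read directly from the explicit expressions for $p_2,p_3,q_2,q_3$ displayed above the lemma: one has $\deg p_2=r>1=\deg q_2$, $\deg p_3=r^2>r+1=\deg q_3$, both $p_2$ and $p_3$ specialise to $1$ at $y_0=y_1=0$, and $q_2\mid q_3$ since $q_3=y_0^r y_1$.

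For the inductive step I would work entirely from \eqref{pnqnform}. Since $q_n$ is by construction a monic monomial in $y_0,y_1$ and the inductive hypothesis gives $\deg p_n>\deg q_n$, the sum $p_n^r+q_n^r$ has no cancellation at top degree, so $\deg\bigl(p_n^r+q_n^r\bigr)=r\deg p_n$. This yields the linear recurrences $\deg p_{n+1}=r\deg p_n-\deg p_{n-1}$ and $\deg q_{n+1}=r\deg q_n-\deg q_{n-1}$. Setting $c_n:=\deg p_n-\deg q_n$, the difference satisfies the same linear recurrence, and the hypothesis $c_n\ge c_{n-1}\ge 1$ propagates via $c_{n+1}=rc_n-c_{n-1}\ge(r-1)c_n\ge c_n$ for $r\ge 2$. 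An identical two-step monotonicity argument propagates $\deg p_{n+1}>\deg p_n$ and $\deg q_{n+1}\ge\deg q_n>0$, giving the full chain of strict inequalities. For the evaluation at the origin, once $\deg q_{n-1}>0$ is known, $q_{n-1}$ is a non-constant monomial and hence vanishes at $y_0=y_1=0$; specialising \eqref{pnqnform} then yields $p_n(0,0)=p_{n-1}(0,0)^r/p_{n-2}(0,0)=1$ by the inductive hypothesis. Finally, $q_n\mid q_{n+1}$ is immediate from $q_{n+1}/q_n=q_n^{r-1}/q_{n-1}$: by induction $q_{n-1}\mid q_n$, so for $r\ge 2$ we have $q_{n-1}\mid q_n^{r-1}$ and the quotient is a genuine monomial.

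The main obstacle I anticipate is the tight coupling of the four assertions: at each inductive level each one feeds into the proof of the others at the next, so they must be bundled together and the base cases checked to hold uniformly in $r\ge 2$. The tightest regime is $r=2$, where $c_{n+1}=2c_n-c_{n-1}$ with $c_2=c_3=1$ forces $c_n\equiv 1$, so the strict inequality $\deg p_n>\deg q_n$ is maintained by the thinnest possible margin; it is reassuring that the argument still survives in this extremal case.
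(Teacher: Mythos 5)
Your proof is correct, and it follows exactly the route the paper intends: the paper simply states that Lemma~\ref{deglemma1} is ``readily obtained from equation~\eqref{pnqnform}'', and your induction (degree recurrences $\deg p_{n+1}=r\deg p_n-\deg p_{n-1}$, $\deg q_{n+1}=r\deg q_n-\deg q_{n-1}$ with no top-degree cancellation, plus the specialisation at $y_0=y_1=0$ and the monomial quotient $q_{n+1}/q_n=q_n^{r-1}/q_{n-1}$) is precisely the straightforward fleshing-out of that remark, with the base cases $n=2,3$ checked correctly for all $r\ge 2$.
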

Lemma \ref{deglemma1} is readily obtained from equation~\eqref{pnqnform}. The following is the main theorem in this section. Our aim is to introduce new techniques through the proof.
\begin{Theorem} \label{ynproposition}
We have the following properties for the iterate $y_n=p_n/q_n$ of equation~\eqref{yrecurrence}:
\begin{enumerate}\itemsep=0pt
\item[$1.$] For $n\ge 2$, the iterate $p_n$ is factorized into $r$ non-unit irreducible polynomials in $\mathbb{C}[y_0,y_1]$ as
\begin{gather*}
p_n=p_n^{(1)}p_n^{(2)}\cdots p_n^{(r)}.
\end{gather*}
\item[$2.$]
\begin{gather*}
\deg p_n^{(1)}=\deg p_n^{(2)}=\cdots =\deg p_n^{(r)}=\frac{\deg p_n}{r}, \qquad n\ge 2.
\end{gather*}
\item[$3.$] By appropriately rearranging the order of $p_{n}^{(1)},\dots,p_{n}^{(r)}$, we have
\begin{gather*}
p_{n}^{(j)}\, \big|\, \big(y_{n-1} - \zeta^{2j-1}\big), \qquad n\ge 2,
\end{gather*}
in $R:=\mathbb{C}\big[y_0^{\pm},y_1^{\pm}\big]$ for every $j$.
\item[$4.$] $p_n^{(j)}$ and $p_m^{(i)}$ are coprime as polynomials in $\mathbb{C}[y_0,y_1]$ if and only if $(n,j)\neq (m,i)$.
\end{enumerate}
\end{Theorem}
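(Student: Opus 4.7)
The plan is to proceed by induction on $n$, establishing all four claims simultaneously. The base cases $n=2,3$ follow from direct computation:
\begin{gather*}
p_2 = y_1^r + 1 = \prod_{j=1}^r \bigl(y_1 - \zeta^{2j-1}\bigr)
\end{gather*}
is an explicit factorization into $r$ linear (hence irreducible) factors, and using the identity $\prod_\omega(A - \omega B) = A^r + B^r$, with $\omega$ running over the roots of $x^r + 1$, one obtains
\begin{gather*}
p_3 = (y_1^r + 1)^r + y_0^r = \prod_{j=1}^r \bigl(y_1^r + 1 - \zeta^{2j-1} y_0\bigr),
\end{gather*}
a product of $r$ irreducibles, each linear in $y_0$; part~3 is immediate from $p_3^{(j)}/y_0 = y_2 - \zeta^{2j-1}$ in $R$.

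For the inductive step, the workhorse is the factored form of \eqref{pnqnform},
\begin{gather*}
p_n \cdot p_{n-2} = p_{n-1}^r + q_{n-1}^r = \prod_\omega P_{n-1}(\omega), \qquad P_{n-1}(\omega) := p_{n-1} - \omega q_{n-1},
\end{gather*}
valid for $n \ge 4$, with $\omega$ running over the roots of $x^r+1$. Since $q_{n-1}$ is a unit in $R$, $P_{n-1}(\omega)$ is associated in $R$ to $y_{n-1}-\omega$. The crucial divisibility step is that each $p_{n-2}^{(\omega^{-1})}$ divides $P_{n-1}(\omega)$ in $R$: working modulo $p_{n-2}^{(\omega^{-1})}$, the inductive hypothesis gives $y_{n-3} \equiv \omega^{-1}$, and because $p_{n-2}^{(\omega^{-1})}$ divides $p_{n-2}$ (hence $y_{n-2}$ in $R$) we obtain $y_{n-2} \equiv 0$; substituting into the recurrence yields $y_{n-1} \equiv (0+1)/\omega^{-1} = \omega$. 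Setting $p_n^{(\omega)} := P_{n-1}(\omega)/p_{n-2}^{(\omega^{-1})}$ in $R$ (normalized to a polynomial with no monomial factor via Lemma~\ref{deglemma1}) and using that $\omega \mapsto \omega^{-1}$ permutes the roots, we obtain the factorization $p_n = \prod_\omega p_n^{(\omega)}$; the common degree $\deg p_n/r$ (part~2) follows from $\deg p_n = r\deg p_{n-1} - \deg p_{n-2}$, and the divisibility (part~3) is immediate from the construction.

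The main obstacle is part~1, the irreducibility of each $p_n^{(\omega)}$ in $\mathbb{C}[y_0, y_1]$. The degree count alone guarantees only that $p_n$ has \emph{at least} $r$ irreducible factors, so each candidate $p_n^{(\omega)}$ could a priori split further. To close this gap, I would strengthen the induction hypothesis to assert additionally that for each $k \ge 3$ the polynomial $P_k(\omega)$ has \emph{exactly} two irreducible factors in $R$, namely $p_{k-1}^{(\omega^{-1})}$ and $p_{k+1}^{(\omega)}$; combined with the master identity and the inductive factorization of $p_{n-2}$, this pins down exactly $r$ irreducible factors of $p_n$, forcing each $p_n^{(\omega)}$ to be irreducible. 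The delicate step is closing the loop and showing that $P_n(\omega)$ itself admits only two irreducible factors. I expect to handle this by analyzing the zero locus $V_n(\omega) := \{y_n = \omega\} \subset (\mathbb{C}^*)^2$ (with $y_n$ regarded as a Laurent polynomial, so defined throughout $(\mathbb{C}^*)^2$) via the birational shift $\phi(y_0, y_1) = (y_1, (y_1^r+1)/y_0)$: the only new component at each iterate is the strict transform of the preceding exceptional component, and the careful book-keeping --- in particular showing that the contracting lines $\{y_1 = \omega'\}$ do not become new irreducible components of $V_n(\omega)$ inside $(\mathbb{C}^*)^2$, despite their algebraic closures passing through the contraction points $(\omega', 0)$ --- is the main technical content.

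Finally, part~4 (coprimeness) splits naturally. For factors at the same index $n$ with distinct $\omega,\omega'$, any common divisor would divide $(y_{n-1} - \omega) - (y_{n-1} - \omega') = \omega'-\omega$, a nonzero constant and hence a unit in $R$, which is impossible. For indices $n \ne m$, the degrees $\deg p_n^{(\omega)} = \deg p_n/r$ and $\deg p_m^{(\omega')} = \deg p_m/r$ are distinct (the sequence $\{\deg p_n\}_{n \ge 2}$ is strictly increasing by the degree recursion), so two irreducible Laurent polynomials of different degrees cannot share a factor.
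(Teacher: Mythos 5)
Your overall architecture --- induction on $n$, the base cases, the divisibility $p_{n-2}^{(\omega^{-1})}\mid(y_{n-1}-\omega)$ obtained by reducing the recurrence modulo $p_{n-2}^{(\omega^{-1})}$, the definition $p_n^{(\omega)}=P_{n-1}(\omega)/p_{n-2}^{(\omega^{-1})}$, and the arguments for parts 2--4 --- coincides with the paper's. The genuine gap is exactly where you flag it: part 1. Your ``strengthened'' induction hypothesis, that $P_k(\omega)$ has exactly two irreducible factors $p_{k-1}^{(\omega^{-1})}$ and $p_{k+1}^{(\omega)}$, buys no leverage: for $k=n-1$ it \emph{is} the irreducibility of $p_n^{(\omega)}$, i.e., the statement to be proved at step $n$, so it cannot be assumed there and must be re-established at every step. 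You then defer that step to a birational analysis of the zero loci $\{y_n=\omega\}$ whose decisive point (that the contracted lines contribute no new components in $(\mathbb{C}^*)^2$) you explicitly do not carry out. As written, part 1 is asserted, not proved.

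The paper closes this gap with a purely algebraic counting device which you should adopt. With $\Omega_S(f)$ the number of prime factors of $f$ counted with multiplicity in a UFD $S$, Lemma~\ref{UFDlemma} gives $\Omega_{S[g^{-1}]}(f)\le\Omega_S(f)$, with equality iff $f$ and $g$ are coprime. Your factorization into $r$ non-unit factors gives $\Omega_R(y_n)\ge r$ (non-unit via the degree count from Lemma~\ref{deglemma1}). For the matching upper bound one changes coordinates: in $\mathbb{C}\big[y_{n-2}^{\pm},y_{n-1}^{\pm}\big]$ the element $y_n=y_{n-2}^{-1}\prod_j\big(y_{n-1}-\zeta^{2j-1}\big)$ visibly has exactly $r$ prime factors; since the recurrence is reversible, $R\big[y_{n-2}^{-1},y_{n-1}^{-1}\big]=\mathbb{C}\big[y_{n-2}^{\pm},y_{n-1}^{\pm}\big]\big[y_0^{-1},y_1^{-1}\big]$, so $\Omega\le r$ over this common localization; and because $y_n$ is coprime to both $y_{n-1}$ and $y_{n-2}$ in $R$, localizing at them loses no factors, whence $\Omega_R(y_n)\le r$. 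Equality then forces each of your $r$ non-unit factors to be irreducible. Note that this route requires a separate proof that $y_n$ is coprime to $y_{n-2}$ (coprimeness to $y_{n-1}$ is immediate), which is absent from your sketch; the paper obtains it by writing $y_n=y_{n-4}/y_{n-3}^r+y_{n-2}f$ with $f\in R\big[y_{n-3}^{-1}\big]$ and specializing $(y_0,y_1)$ to a zero of $p_{n-2}^{(j)}$ at which $y_{n-3}$ and $y_{n-4}$ do not vanish (possible by the inductive coprimeness), so that $y_n$ does not vanish there.
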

From the fourth property of Theorem~\ref{ynproposition}, we immediately obtain that~$y_n$ and~$y_m$ are coprime if and only if $n\neq m$. Another corollary of Theorem~\ref{ynproposition} is that
\begin{gather*}
p_n^{(j)}=\frac{p_{n-1}-\zeta^{2j-1}q_{n-1}}{p_{n-2}^{(r-j+1)}}, \qquad n\ge 4,
\end{gather*}
which shall be proved in the course of the proof of Theorem~\ref{ynproposition}.

\begin{proof}
Let us prove the four properties by induction on $n$. The fourth property is equivalent to the following two properties: $p_n^{(j)}$ and $p_n^{(i)}$ are coprime if $j\neq i$; $p_m^{(j)}$ and $p_n^{(i)}$ are coprime if $2\le m<n$.

{\bf The case of $\boldsymbol{n=2}$.} It is immediate that $p_2^{(j)}=y_1-\zeta^{2j-1}$. Therefore all the properties are readily obtained.

{\bf The case of $\boldsymbol{n=3}$.} First
\begin{gather}
y_3=\frac{1}{y_1}\prod_{j=1}^r \big(y_2-\zeta^{2j-1}\big). \label{y3factor}
\end{gather}
Since $y_2-\zeta^{2j-1}$ is not a unit in $R$, we have
\begin{gather}
\Omega_R (y_3)\ge r, \label{y3ge}
\end{gather}
where $\Omega_R$ denotes the total number of prime elements (see Appendix~\ref{Append}). The localizations of~$R$ has the following relation
\begin{gather*}
\mathbb{C}\big[y_1^{\pm},y_2^{\pm}\big]\big[y_0^{-1}\big]=R\big[y_2^{-1}\big].
\end{gather*}
Therefore using Lemma \ref{UFDlemma}, we have
\begin{gather*}
r=\Omega_{\mathbb{C} [y_1^{\pm},y_2^{\pm}]}(y_3)\ge \Omega_{\mathbb{C} [y_1^{\pm},y_2^{\pm} ] [y_0^{-1}]}(y_3)=\Omega_{R[y_2^{-1}]}(y_3).
\end{gather*}
Since $y_2$ and $y_3$ are coprime in $R$, by using Lemma~\ref{UFDlemma} again, we have
\begin{gather*}
\Omega_{R[y_2^{-1}]}(y_3)=\Omega_R (y_3).
\end{gather*}
Thus we obtain
\begin{gather}
\Omega_R(y_3)\le r. \label{y3le}
\end{gather}
From equations \eqref{y3ge} and \eqref{y3le}, we have
\begin{gather*}
\Omega_R(y_3)=r.
\end{gather*}
Therefore the decomposition of $y_3$ into irreducible elements in $R$ is equation \eqref{y3factor} itself, since each $y_2-\zeta^{2j-1}$ is a non-unit irreducible element in~$R$. Thus the first property of Theorem~\ref{ynproposition} is proved for $n=3$. From equation \eqref{pnqnform}, we have
\begin{gather*}
p_3=y_0^r\prod_{j=1}^r \big(y_2-\zeta^{2j-1}\big),\qquad p_3^{(j)}=y_0\big(y_2-\zeta^{2j-1}\big).
\end{gather*}
The degree of $p_3^{(j)}$ is clearly independent of $j$, thus the second property is proved. The third property is trivial from the explicit form of~$p_3^{(j)}$. Finally let us prove the fourth property. Since~$y_3$ and~$y_2$ are mutually coprime in $R$, $p_3^{(j)}$ and $p_2^{(i)}$ are mutually coprime. We have
\begin{gather*}
p_3^{(j)}-p_3^{(i)}=y_0\big(\zeta^{2i-1}-\zeta^{2j-1}\big),
\end{gather*}
which is a unit in $R$. Therefore $p_3^{(i)}$ and $p_3^{(j)}$ are mutually coprime in~$R$ if $j\neq i$. Moreover, since~$p_3^{(i)}$ and~$p_3^{(j)}$ do not have monomial factors, they are mutually coprime in~$\mathbb{C}[y_0,y_1]$.

{\bf The case of $\boldsymbol{n\ge 4}$.}
 First let us prove that $y_n$ is coprime with $y_{n-2}$ in $R$. Note that $y_n$ is coprime with $y_{n-1}$ from the form of~\eqref{yrecurrence}. From the induction hypothesis, the decomposition of~$y_{n-2}$ into irreducible elements is
\begin{gather*}
y_{n-2}=u p_{n-2}^{(1)} p_{n-2}^{(2)} \cdots p_{n-2}^{(r)},
\end{gather*}
where $u$ is a unit in $R$. We need to prove that $p_{n-2}^{(j)}$ does not divide $y_n$ for any $j$. A direct calculation shows
\begin{gather*}
y_n=\frac{y_{n-1}^r+1}{y_{n-2}}=\frac{1+y_{n-3}^r}{y_{n-3}^r y_{n-2}}+y_{n-2}f
=\frac{y_{n-4}}{y_{n-3}^r}+y_{n-2} f
\end{gather*}
for some $f\in R\big[y_{n-3}^{-1}\big]$. Since $p_{n-2}^{(j)}$ is coprime with $y_{n-4}$ and $y_{n-3}$, there exist constants $\alpha, \beta\in\mathbb{C}^{\times}$ such that
\begin{gather*}
p_{n-2}^{(j)}\big|_{y_0=\alpha, \, y_1=\beta}=0,\qquad y_{n-4}\big|_{y_0=\alpha,\, y_1=\beta}\neq 0,\qquad y_{n-3}\big|_{y_0=\alpha, \, y_1=\beta}\neq 0.
\end{gather*}
In this setting, we have $y_n\big|_{y_0=\alpha,\, y_1=\beta}\neq 0$, which indicates that $y_n$ is not divisible by $p_{n-2}^{(j)}$.

We prove that $y_{n-1}-\zeta^{2j-1}$ is divisible by $p_{n-2}^{(r-j+1)}$ in $R$. Since $p_{n-2}^{(r-j+1)}$ is coprime with $y_{n-3}$, it is sufficient to prove that $y_{n-1}-\zeta^{2j-1}$ is divisible by $p_{n-2}^{(r-j+1)}$ in $R\big[y_{n-3}^{-1}\big]$. We have
\begin{gather*}
y_{n-1}-\zeta^{2j-1}=\frac{y_{n-2}^r+1}{y_{n-3}}-\zeta^{2j-1}\equiv \frac{1}{y_{n-3}}-\zeta^{2j-1}=-\frac{\zeta^{2j-1}}{y_{n-3}}\big(y_{n-3}-\zeta^{1-2j}\big)\equiv 0,
\end{gather*}
where $\equiv$ indicates a equivalence modulo $p_{n-2}^{(r-j+1)}$.

We shall prove that
\begin{gather}
\Omega_R (y_n)=r. \label{omegaofyn}
\end{gather}
First we prove that $\Omega_R (y_n)\ge r$. We have
\begin{gather*}
y_n=q_{n-2}\prod_{j=1}^r \frac{y_{n-1}-\zeta^{2j-1}}{p_{n-2}^{(r-j+1)}},
\end{gather*}
where
\begin{gather*}
\frac{y_{n-1}-\zeta^{2j-1}}{p_{n-2}^{(r-j+1)}}\in R,
\end{gather*}
from the previous step. Moreover, each factor
\begin{gather*}
\frac{y_{n-1}-\zeta^{2j-1}}{p_{n-2}^{(r-j+1)}}
\end{gather*}
is not a unit in $R$, since
\begin{gather*}
\deg_L \left( \frac{y_{n-1}-\zeta^{2j-1}}{p_{n-2}^{(r-j+1)}} \right)=\deg_L \big(y_{n-1}-\zeta^{2j-1}\big)-\deg p_{n-2}^{(r-j+1)}=\deg p_{n-1}-\frac{\deg p_{n-2}}{r}>0,
\end{gather*}
where $\deg_L$ denotes the degree as a Laurent polynomial (see Appendix~\ref{Append}). Here we have used Lemma~\ref{deglemma1} to prove the last inequality. Therefore $\Omega_R(y_n)\ge r$.

Next we prove that $\Omega_R (y_n)\le r$ by using a relation on the localizations of~$R$. First, from
\begin{gather*}
y_n=\frac{1}{y_{n-2}}\prod_{j=1}^r \big(y_{n-1}-\zeta^{2j-1}\big),
\end{gather*}
we have
\begin{gather*}
\Omega_{\mathbb{C}[y_{n-2}^{\pm}, y_{n-1}^{\pm}]}(y_n)=r.
\end{gather*}
We have following relations on the localizations of $R$:
\begin{gather*}
R\big[y_{n-2}^{-1},y_{n-1}^{-1}\big]=\mathbb{C}\big[y_{n-2}^{\pm},y_{n-1}^{\pm}\big]\big[y_0^{-1},y_1^{-1}\big].
\end{gather*}
Therefore
\begin{gather*}
\Omega_{R[y_{n-2}^{-1},y_{n-1}^{-1}] }\le r.
\end{gather*}
Since we have already proved that $y_n$ is coprime with both $y_{n-1}$ and $y_{n-2}$, it follows from Lemma~\ref{UFDlemma} that $\Omega_R(y_n)=r$. From~\eqref{omegaofyn} we have
\begin{gather*}
p_n^{(j)}=\frac{q_{n-1}\big(y_{n-1}-\zeta^{2j-1}\big)}{p_{n-2}^{(r-j+1)}}=\frac{p_{n-1}-\zeta^{2j-1}q_{n-1}}{p_{n-2}^{(r-j+1)}},
\end{gather*}
from which we can conclude that $\deg p_n^{(j)}$ does not depend on $j$. Now the second and the third properties are proved.

Finally we prove the fourth property of Theorem~\ref{ynproposition}. For $2\le m<n$, the two iterates~$p_m^{(i)}$ and~$p_n^{(j)}$ are both non-unit irreducible polynomials with distinct degrees (note that $\deg p_m^{(i)}\neq \deg p_n^{(j)}$ from Lemma~\ref{deglemma1}). Therefore they are mutually coprime. Next we prove the mutual coprimeness of~$p_n^{(i)}$ and~$p_n^{(j)}$ for $i\neq j$ as polynomials. Since
\begin{gather*}
p_n^{(j)}p_{n-2}^{(r-j+1)}-p_n^{(i)}p_{n-2}^{(r-i+1)}=q_{n-1}\big(\zeta^{2i-1}-\zeta^{2j-1}\big),
\end{gather*}
$p_n^{(j)}$ and $p_n^{(i)}$ are mutually coprime in $R$. Moreover, both of them do not have monomial factors, and thus they are coprime as polynomials.
\end{proof}

\section[One-dimensional discrete Toda type equations without the irreducibility]{One-dimensional discrete Toda type equations\\ without the irreducibility}\label{section3}

In our previous work, we have introduced the following `pseudo-integrable' extension to the one-dimensional discrete Toda equation
\begin{gather}
\tau_{t,n}=\frac{1}{\tau_{t-2,n}}\big( \tau_{t-1,n+1}^{M}\tau_{t-1,n-1}^{L}+\tau_{t-1,n}^{K} \big), \label{cp1dtoda-prev}
\end{gather}
where $M$, $L$, $K$ are some positive integers \cite{KKMT}. When $(M,L,K)=(1,1,2)$ the equation \eqref{cp1dtoda-prev} is the one-dimensional discrete Toda equation. It should be noted that the equation \eqref{cp1dtoda-prev} was first introduced as the number wall and its Laurent property has been proved in~\cite{FZ}. One of our results is as follows
\begin{Proposition}[{\cite[Proposition 5.1]{KKMT}}] \label{previousprop} Every iterate $\tau_{t,n}$ is irreducible and mutually coprime in
\begin{gather*}
\mathbb{Z}\big[\tau_{0,n}^{\pm}, \tau_{1,n}^{\pm}\, |\, n\in\mathbb{Z}\big]
\end{gather*}
on condition that $\mathrm{GCD}(K,L,M)$ is a non-negative power of~$2$.
\end{Proposition}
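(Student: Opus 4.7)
The plan is strong induction on $t$, running irreducibility and pairwise coprimeness in parallel; symmetry of the recurrence under $t \mapsto 1-t$ reduces matters to $t \ge 2$. Write $R = \Z[\tau_{0,n}^{\pm}, \tau_{1,n}^{\pm} \mid n \in \Z]$; the Laurent property $\tau_{t,n} \in R$ is already known from~\cite{FZ}.

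For the base case $t=2$, one has $\tau_{2,n} = \tau_{0,n}^{-1} N_n$ with $N_n = \tau_{1,n+1}^M \tau_{1,n-1}^L + \tau_{1,n}^K$, and since $\tau_{0,n}$ is a unit in $R$, irreducibility of $\tau_{2,n}$ in $R$ is equivalent to irreducibility of $N_n$ in the polynomial ring $\Z[\tau_{1,k} \mid k \in \Z]$. Setting $d = \mathrm{GCD}(K,L,M) = 2^s$, I would write $N_n = A^d + B^d$ with $A = \tau_{1,n+1}^{M/d}\tau_{1,n-1}^{L/d}$ and $B = \tau_{1,n}^{K/d}$. The key number-theoretic input is that $X^{2^s} + Y^{2^s}$ is irreducible over $\Z$ (a homogenization of the $2^{s+1}$-th cyclotomic polynomial), whereas for any $d$ with an odd factor greater than~$1$ the expression $X^d + Y^d$ splits. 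Because $A$ and $B$ are monomials in disjoint blocks of initial variables and $\gcd(M/d, L/d, K/d) = 1$, this abstract irreducibility lifts to $N_n$: any factorization $N_n = f_1 f_2$ would, after tracking the bidegree in $\{\tau_{1,n+1},\tau_{1,n-1}\}$ versus $\{\tau_{1,n}\}$, descend to a nontrivial factorization of $X^d + Y^d$.

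For the inductive step, assume the conclusion for all $(t', n')$ with $t' < t$. I would run an $\Omega_R$-count in the style of the proof of Theorem~\ref{ynproposition}, treating the previously generated $\tau_{t-1, n \pm 1}$, $\tau_{t-1,n}$, $\tau_{t-2,n}$ as new independent generators by passing to the localization where they become units. Over $R\big[\tau_{t-1,n\pm 1}^{-1},\tau_{t-1,n}^{-1}\big]$ one can count the prime factors of $\tau_{t,n}$ using $\mathrm{GCD}(K,L,M)=2^s$ and exactly the same cyclotomic reasoning as at $t=2$, obtaining $\Omega = 1$. Lemma~\ref{UFDlemma} then transfers this count back to $R$, once one verifies by a direct specialization of the initial data that $\tau_{t,n}$ is coprime with each of $\tau_{t-1,n\pm 1}$, $\tau_{t-1,n}$ and $\tau_{t-2,n}$. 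Mutual coprimeness of $\tau_{t,n}$ with any strictly earlier $\tau_{t',n'}$ follows because two irreducibles are coprime unless associate, and a bidegree comparison in the appropriate initial variables rules out associateness.

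The principal obstacle, in my view, is the irreducibility-preserving substitution step. Substitution of monomials into an irreducible polynomial need not preserve irreducibility (for example, substituting $X \mapsto x^2$, $Y \mapsto y^2$ into the irreducible $X-Y$ produces the reducible $x^2 - y^2$), so at each inductive stage one must argue carefully that the particular substitutions into $X^d + Y^d$ produce no hidden factorization. This is precisely where the hypothesis that $\mathrm{GCD}(K,L,M)$ be a power of~$2$ is essential, and carrying the argument uniformly across all $t$ is the technical heart of the proof.
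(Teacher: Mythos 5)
There is a genuine gap, and you have in fact named it yourself without closing it. Note first that the paper does not supply a proof of Proposition~\ref{previousprop} at all: it is quoted from \cite{KKMT} (with the corrected hypothesis on $\mathrm{GCD}(K,L,M)$), and the text only records that the proof hinges on the fact that $X^MY^L+Z^K$ is irreducible in $\mathbb{Z}[X,Y,Z]$ if and only if $\mathrm{GCD}(K,L,M)$ is a power of~$2$. Your outline is in the right spirit --- it is essentially the localization/$\Omega$-counting machinery that the paper deploys for Theorem~\ref{mainthm} --- but the one ingredient that actually carries the hypothesis ``$\mathrm{GCD}(K,L,M)=2^s$'', namely the irreducibility of $X^MY^L+Z^K$ over $\mathbb{Z}$, is never established. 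Your proposed route, descending a factorization of $A^d+B^d$ (with $A,B$ monomials, $d=2^s$) to a factorization of $X^d+Y^d$ by ``tracking the bidegree'', is exactly the kind of substitution argument you yourself point out is invalid in general: a factor of $A^d+B^d$ need not be pulled back from a factor of $X^d+Y^d$, so irreducibility of the binary form $X^{2^s}+Y^{2^s}$ proves nothing by itself (already the case $d=1$, e.g.\ $X^2Y+Z^2$, requires a separate argument, say via the criterion for $Z^K-a$ over $\mathbb{Q}(X,Y)$ or a Newton-polygon/quasi-homogeneity argument). Since you flag this as ``the principal obstacle'' and ``the technical heart'' rather than resolving it, the proposal does not yet constitute a proof. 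A side misconception: once this three-variable fact is available, it does \emph{not} have to be re-argued ``at each inductive stage''; in the ring $R_{t-2}=\mathbb{Z}\big[\tau_{t-2,n}^{\pm},\tau_{t-1,n}^{\pm}\,|\,n\in\mathbb{Z}\big]$ the relevant iterates are genuinely independent variables, and the transfer back to $R$ is exactly what the localization identity plus Lemma~\ref{UFDlemma} accomplish.

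Two further steps are asserted rather than proved, and they are where the real work of the inductive step lies. First, inverting only the four nearby iterates $\tau_{t-1,n\pm1},\tau_{t-1,n},\tau_{t-2,n}$ does not turn $R$ into a Laurent ring in new generators; the correct move (as in the proof of Theorem~\ref{mainthm}) is the equality $R\big[\{\tau_{t-2,n}^{-1},\tau_{t-1,n}^{-1}\}_{n\in\mathbb{Z}}\big]=R_{t-2}\big[\{\tau_{0,n}^{-1},\tau_{1,n}^{-1}\}_{n\in\mathbb{Z}}\big]$, which uses the Laurent property in both time directions and then requires coprimeness of $\tau_{t,n}$ with \emph{every} $\tau_{t-1,n'}$ and $\tau_{t-2,n'}$, $n'\in\mathbb{Z}$, not just the four you list (the paper's analogue, Lemma~\ref{coprimethree}, has to treat eight nearby iterates, and the distant ones need at least a one-line argument). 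Second, the ``direct specialization of the initial data'' establishing those coprimality statements, and the monotone degree growth needed to rule out associateness of irreducibles at different times (the analogue of Lemma~\ref{deglemma1}/Lemma~\ref{lemmapqdeg}), are precisely the delicate specialization arguments that occupy most of the paper's proof in the factorizable case; gesturing at them is not enough. In short: the skeleton matches the paper's method, but both the number-theoretic key fact and the coprimeness/specialization lemmas that make the $\Omega$-count transfer are missing.
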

However, we did not present its proof there. Here for the reader of~\cite{KKMT} we need to remark that in the statement of Proposition~5.1 in~\cite{KKMT}, we have mistakenly omitted the condition that ``$\mathrm{GCD}(K,L,M)$ is a non-negative power of~$2$''.
The proof of proposition \ref{previousprop} depends on the fact that the r.h.s.~$\big(X^M Y^L+Z^K\big)$ is irreducible as a polynomial in $\mathbb{Z}[X,Y,Z]$ if and only if $\mathrm{GCD}(K,L,M)=2^q$ for some $q\ge 0$.
In this article we focus on the equation \eqref{cp1dtoda-prev} with a factorizable r.h.s. We shall prove that, even if the r.h.s.\ is factorizable in $\mathbb{Z}[X,Y,Z]$ (e.g., $X^3Y^3+Z^3$), the coprimeness property still holds. Let us investigate
\begin{gather}
\tau_{t,n}=\frac{1}{\tau_{t-2,n}}\big( \tau_{t-1,n+1}^{rm}\tau_{t-1,n-1}^{rl}+\tau_{t-1,n}^{rk} \big), \label{cp1dtoda}
\end{gather}
where $r\ge 2$ and $\mathrm{GCD}(m,l,k)=1$. The irreducibility of its iterates is not satisfied, since the r.h.s.\ factorizes as
\begin{gather*}
\tau_{t,n}=\frac{1}{\tau_{t-2,n}} \prod_{j=1}^r \big( \tau_{t-1,n+1}^{m}\tau_{t-1,n-1}^{l} - \zeta^{2j-1}\tau_{t-1,n}^{k} \big),
\end{gather*}
where $\zeta=\exp(\sqrt{-1}\pi/r)$. However, the coprimeness property is satisfied as stated below in Theorem~\ref{mainthm}. The set of initial variables of the equation is
\begin{gather*}
\left\{ \tau_{0,n},\tau_{1,n} \, |\, n\in \mathbb{Z} \right\},
\end{gather*}
and we consider the evolution of equation \eqref{cp1dtoda} towards $t\ge 2$.
Let us define
\begin{gather*}
R_0=\mathbb{C}\big[ \big\{\tau_{0,n}^{\pm},\tau_{1,n}^{\pm}\big\}_{n\in\mathbb{Z}} \big]
\end{gather*}
and discuss the coprimeness of distinct two iterates over $R_0$. We follow the procedure used in the previous section. Let us note that $X^{m}Y^{l}-\zeta^{2j-1}Z^{k}$ is irreducible as a polynomial in $\mathbb{C}[X,Y,Z]$ for every integer $j$.

\begin{Theorem} \label{mainthm}
Let us write the iterates of equation \eqref{cp1dtoda} as
\begin{gather*}
\tau_{t,n}=\frac{p_{t,n}}{q_{t,n}},
\end{gather*}
where $p_{t,n}$, $q_{t,n}$ are polynomials, $q_{t,n}$ is a monic monomial, and $p_{t,n}$ and $q_{t,n}$ do not have common factors. Then we have the following four properties:
\begin{enumerate}\itemsep=0pt
\item[$1.$] For $t\ge 2$, $p_{t,n}$ is factorized into the product of $r$ non-unit irreducible polynomials in $R_0$ as
\begin{gather*}
p_{t,n}=p_{t,n}^{(1)}p_{t,n}^{(2)}\cdots p_{t,n}^{(r)}.
\end{gather*}
\item[$2.$] For $i=1,2,\dots, r$, we have
\begin{gather*}
\deg p_{t,n}^{(i)}=\frac{\deg p_{t,n}}{r}, \qquad t\ge 2.
\end{gather*}
\item[$3.$] By appropriately rearranging the order of $p_{t,n}^{(1)},\dots,p_{t,n}^{(r)}$, we have
\begin{gather*}
p_{t,n}^{(j)}\, \big|\, \big(\tau_{t-1,n+1}^{m}\tau_{t-1,n-1}^{l} - \zeta^{2j-1}\tau_{t-1,n}^{k}\big),\qquad t\ge 2,
\end{gather*}
for every $j$.
\item[$4.$] Two iterates $p_{t,n}^{(i)}$ and $p_{s,n'}^{(j)}$ are coprime unless $(i,t,n)=(j,s,n')$.

In particular, if $(t,n)\neq (s,n')$, two iterates $\tau_{t,n}$ and $\tau_{s,n'}$ are coprime as Laurent polynomials.
\end{enumerate}
\end{Theorem}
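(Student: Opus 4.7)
The plan is to prove Theorem~\ref{mainthm} by induction on $t$, closely following the strategy used in the proof of Theorem~\ref{ynproposition} but adapted to the two-dimensional lattice. The four properties are tightly interlocked and must be propagated together at each induction step.

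For the base cases, when $t=2$ the equation $\tau_{2,n} = \tau_{0,n}^{-1}\prod_{j=1}^r\big(\tau_{1,n+1}^{m}\tau_{1,n-1}^{l} - \zeta^{2j-1}\tau_{1,n}^{k}\big)$ is itself the decomposition into irreducibles in $R_0$, since each factor $X^{m}Y^{l}-\zeta^{2j-1}Z^{k}$ is irreducible in $\mathbb{C}[X,Y,Z]$ by the remark preceding the theorem. The case $t=3$ is handled in direct analogy with the $n=3$ case of Theorem~\ref{ynproposition}: one combines the explicit product expansion, giving $\Omega_{R_0}(\tau_{3,n})\ge r$, with an upper bound $\Omega_{R_0}(\tau_{3,n})\le r$ obtained from a localization isomorphism of the form $R_0\big[\tau_{2,n}^{-1}\big] \cong \mathbb{C}\big[\{\tau_{1,n'}^{\pm}\}_{n'},\tau_{2,n}^{\pm}\big]\big[\tau_{0,n}^{-1}\big]$, obtained by solving the recurrence for $\tau_{0,n}$, followed by an application of Lemma~\ref{UFDlemma} on the behaviour of $\Omega$ under localization.

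For the inductive step $t\ge 4$, I would proceed in the same order as in Theorem~\ref{ynproposition}. First, show that $\tau_{t,n}$ is coprime in $R_0$ with each of $\tau_{t-1,n-1}$, $\tau_{t-1,n}$, $\tau_{t-1,n+1}$, and $\tau_{t-2,n}$: the first three are immediate from the form of the recurrence, while coprimeness with $\tau_{t-2,n}$ uses the inductive factorization of $\tau_{t-2,n}$ into the $p_{t-2,n}^{(j)}$'s together with a substitution argument that exhibits specializations of the initial data at which $p_{t-2,n}^{(j)}$ vanishes but $\tau_{t,n}$ does not (obtained by rewriting $\tau_{t,n}$ using the recurrence one step earlier, exactly as in the 1D case). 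Second, establish the divisibility $p_{t-2,n}^{(r-j+1)}\mid\big(\tau_{t-1,n+1}^{m}\tau_{t-1,n-1}^{l} - \zeta^{2j-1}\tau_{t-1,n}^{k}\big)$ in $R_0$ by reducing modulo $p_{t-2,n}^{(r-j+1)}$ after inverting the relevant $\tau_{t-3,\cdot}$'s and applying the induction hypothesis (property~3). These combine with the identity
\begin{gather*}
\tau_{t,n}=q_{t-2,n}\prod_{j=1}^{r}\frac{\tau_{t-1,n+1}^{m}\tau_{t-1,n-1}^{l}-\zeta^{2j-1}\tau_{t-1,n}^{k}}{p_{t-2,n}^{(r-j+1)}}
\end{gather*}
and a Laurent-degree estimate (a routine 2D analogue of Lemma~\ref{deglemma1}) to give $\Omega_{R_0}(\tau_{t,n})\ge r$. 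The matching upper bound comes from a localization isomorphism between $R_0$ and a polynomial ring $A$ in the shifted generators $\{\tau_{t-2,n'}^{\pm},\tau_{t-1,n'}^{\pm}\}_{n'}$, under which the same factorization yields $\Omega_{A}(\tau_{t,n})=r$; the coprimeness results of the first step together with Lemma~\ref{UFDlemma} then transfer this bound to $R_0$. The within-$(t,n)$ coprimeness of the $p_{t,n}^{(i)}$'s finally follows from taking appropriate differences of $p_{t,n}^{(j)}p_{t-2,n}^{(r-j+1)}$ (each side computed from property~3), which produces a remainder supported on monomial quantities, together with the fact that no $p_{t,n}^{(i)}$ has a monomial factor.

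I expect two principal obstacles. First, setting up the ``shift of initial line'' localization used for the upper bound on $\Omega_{R_0}(\tau_{t,n})$ is significantly more delicate than in the one-dimensional case: one must restrict to the finite light-cone of lattice points actually appearing in the Laurent support of $\tau_{t,n}$, invert every intermediate $\tau_{t',n'}$ on both sides, and verify that the resulting localizations of $R_0$ and $A$ coincide before Lemma~\ref{UFDlemma} can be applied. Second, the cross-index part of property~4, namely the coprimeness of $p_{t,n}^{(i)}$ and $p_{s,n'}^{(j)}$ for distinct $(i,t,n)\neq(j,s,n')$, cannot be reduced to the simple degree inequality used in the 1D case, since iterates with $t=s$ but $n\neq n'$ may a priori share the same total degree. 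This will require a specialization argument exploiting that distinct lattice sites $(t,n)$ have distinct characteristic supports among the initial variables; alternatively, one may propagate cross-index coprimeness along the lattice by a secondary induction, using the within-$(t,n)$ coprimeness already established and the recurrence to trade coprimeness at level $t$ for coprimeness at levels $t-1$ and $t-2$.
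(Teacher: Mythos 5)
Your skeleton (induction on $t$, counting primes with $\Omega$, divisibility of the $j$-th factor by $p_{t-2,n}^{(r-j+1)}$, and a two-sided bound on $\Omega_{R_0}(\tau_{t,n})$ via localization and Lemma~\ref{UFDlemma}) is the same as the paper's, but the coprimeness preparation in your inductive step has a genuine gap. To bring the upper bound $\Omega\le r$ back to $R_0$ with Lemma~\ref{UFDlemma} you must know that $\tau_{t,n}$ is coprime with \emph{every} element you invert. The paper inverts exactly the two lines $\{\tau_{t-2,n'},\tau_{t-1,n'}\}_{n'\in\mathbb{Z}}$ (identifying $R_0\big[\{\tau_{t-2,n'}^{-1},\tau_{t-1,n'}^{-1}\}\big]$ with $R_{t-2}\big[\{\tau_{0,n'}^{-1},\tau_{1,n'}^{-1}\}\big]$ via the Laurent property of the backward evolution), and its key Lemma~\ref{coprimethree} is precisely the statement that $\tau_{t,n}$ is coprime with the whole family on those two lines --- concretely the eight nearby iterates $\tau_{t-1,n},\tau_{t-1,n\pm1},\tau_{t-2,n},\tau_{t-2,n\pm1},\tau_{t-2,n\pm2}$, each of the cases $\tau_{t-2,n\pm1},\tau_{t-2,n\pm2}$ requiring its own specialization argument built on an extreme initial variable lying in the light cone of $\tau_{t,n}$ but not in that of the factor being annihilated. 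Your step~1 treats only $\tau_{t-2,n}$ and $\tau_{t-1,n},\tau_{t-1,n\pm1}$; the cases $\tau_{t-2,n\pm1},\tau_{t-2,n\pm2}$, which are exactly the new two-dimensional difficulty, are nowhere addressed, and your alternative of inverting ``every intermediate $\tau_{t',n'}$'' would demand even more coprimeness statements that you have not established. (Also, coprimeness with $\tau_{t-1,n},\tau_{t-1,n\pm1}$ is not literally immediate from the recurrence: it uses the induction-hypothesis pairwise coprimeness on line $t-1$, since a common prime is pushed onto a neighbouring $\tau_{t-1,n'}$.)

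The within-$(t,n)$ part of property~4 as you sketch it would also fail. Eliminating the common terms from $h\,p_{t,n}^{(j)}$ and $h'\,p_{t,n}^{(i)}$ (with $h,h'$ Laurent monomials, via property~3) leaves $\big(\zeta^{2i-1}-\zeta^{2j-1}\big)\tau_{t-1,n}^{k}$, and unlike in the one-dimensional Theorem~\ref{ynproposition} this remainder is \emph{not} a monomial, so ``no monomial factor'' does not suffice. You need the paper's extra step: if $p_{t,n}^{(i)}$ and $p_{t,n}^{(j)}$ shared a factor, then being irreducible each would divide $\tau_{t-1,n}^{k}$ and hence coincide, up to a unit, with some $p_{t-1,n}^{(i')}$, contradicting the induction hypothesis together with the degree growth. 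Conversely, the obstacle you flag for $t=s$, $n\neq n'$ is not a real one: since both factors are irreducible, it is enough that some initial variable occurs in one and not in the other (their light cones differ), which is the paper's one-line argument; no secondary induction or separate specialization scheme is needed there.
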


Let us remark that, from Theorem \ref{mainthm}, $p_{t,n}^{(j)}$ is recursively defined as
\begin{gather}
p_{t,n}^{(j)}=\frac{p_{t-1,n}-\zeta^{2j-1} q_{t-1,n}}{p_{t-2,n}^{(r-j+1)}}, \label{ptnj}
\end{gather}
which is derived in the course of proving Theorem~\ref{mainthm}.

The proof is done by induction. Let us prepare several lemmas to prove Theorem~\ref{mainthm}. Let us assume in the following lemmas that properties $1$ through $4$ are satisfied for $p_{s,n}$ with $s\le t-1$.

\begin{Lemma} \label{dividelemma}
The term $p_{t-2,n}^{(r-j+1)}$ divides $\big(\tau_{t-1,n+1}^{m}\tau_{t-1,n-1}^{l} - \zeta^{2j-1}\tau_{t-1,n}^{k}\big)$ in the ring of Laurent polynomials $\mathbb{C}\big[\big\{\tau_{0,n}^{\pm},\tau_{1,n}^{\pm}\big\}_{n\in\mathbb{Z}}\big]$.
\end{Lemma}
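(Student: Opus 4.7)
The plan is to adapt the divisibility step from the proof of Theorem~\ref{ynproposition} (the part establishing $p_{n-2}^{(r-j+1)}\mid y_{n-1}-\zeta^{2j-1}$) to the two-index setting. Two ingredients drive the argument. First, by property~4 of the induction hypothesis applied at earlier times, $p_{t-2,n}^{(r-j+1)}$ is coprime to each of $\tau_{t-3,n-1}$, $\tau_{t-3,n}$, $\tau_{t-3,n+1}$, so it suffices to prove the divisibility in the localization $R_0[\tau_{t-3,n-1}^{-1},\tau_{t-3,n}^{-1},\tau_{t-3,n+1}^{-1}]$. Second, since $p_{t-2,n}^{(r-j+1)}$ is an irreducible factor of $p_{t-2,n}$ and $q_{t-2,n}$ is a unit in $R_0$, we have $\tau_{t-2,n}\equiv 0$ modulo $p_{t-2,n}^{(r-j+1)}$.

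Concretely, I would apply \eqref{cp1dtoda} at time $t-1$ to write
\[
\tau_{t-1,n'}=\frac{\tau_{t-2,n'+1}^{rm}\tau_{t-2,n'-1}^{rl}+\tau_{t-2,n'}^{rk}}{\tau_{t-3,n'}}
\]
for $n'\in\{n-1,n,n+1\}$. Modulo $p_{t-2,n}^{(r-j+1)}$ each of these loses exactly one monomial term: the $\tau_{t-2,n}^{rm}$ term for $n'=n-1$, the $\tau_{t-2,n}^{rk}$ term for $n'=n$, and the $\tau_{t-2,n}^{rl}$ term for $n'=n+1$. A direct computation then produces
\[
\tau_{t-1,n+1}^{m}\tau_{t-1,n-1}^{l}-\zeta^{2j-1}\tau_{t-1,n}^{k}\equiv \frac{\tau_{t-2,n+1}^{rkm}\tau_{t-2,n-1}^{rkl}}{\tau_{t-3,n+1}^{m}\tau_{t-3,n-1}^{l}\tau_{t-3,n}^{k}}\bigl(\tau_{t-3,n}^{k}-\zeta^{2j-1}\tau_{t-3,n+1}^{m}\tau_{t-3,n-1}^{l}\bigr)
\]
modulo $p_{t-2,n}^{(r-j+1)}$ in the localized ring. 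Using $\zeta^{2r}=1$, the parenthesized factor equals $-\zeta^{2j-1}\bigl(\tau_{t-3,n+1}^{m}\tau_{t-3,n-1}^{l}-\zeta^{2(r-j+1)-1}\tau_{t-3,n}^{k}\bigr)$, which is divisible by $p_{t-2,n}^{(r-j+1)}$ by property~3 of the induction hypothesis at time $t-2$ (applied with index $r-j+1$). This yields the claimed divisibility in the localization, and by the coprimeness checks above, also in $R_0$.

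The main obstacle here is bookkeeping rather than conceptual: one must verify, using property~4 at earlier times (in the outer induction on $t$ for Theorem~\ref{mainthm}), that every element being inverted is genuinely coprime to $p_{t-2,n}^{(r-j+1)}$, so the congruence can be lifted back to a Laurent-polynomial divisibility statement. Tracking the exponents $rm,rl,rk$ so as to isolate the correct combination inside the parentheses also requires care, but once this cancellation pattern is identified the remainder reduces to an algebraic identity supplied by the induction hypothesis.
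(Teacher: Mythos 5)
Your proposal is correct and follows essentially the same route as the paper: substitute the recurrence at time $t-1$, observe that modulo $p_{t-2,n}^{(r-j+1)}$ (where $\tau_{t-2,n}\equiv 0$) each of $\tau_{t-1,n\pm1}$, $\tau_{t-1,n}$ reduces to a single monomial quotient, factor out the common monomial, and identify the remaining bracket with $-\zeta^{2j-1}\big(\tau_{t-3,n+1}^m\tau_{t-3,n-1}^l-\zeta^{2(r-j+1)-1}\tau_{t-3,n}^k\big)$, which property~3 of the induction hypothesis kills. Your explicit localization/coprimeness justification for inverting the $\tau_{t-3,\cdot}$'s is exactly the step the paper leaves implicit here (and spells out in the analogous argument of Section~\ref{section2}), so there is no substantive difference.
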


\begin{proof} Every calculation shall be done modulo $p_{t-2,n}^{(r-j+1)}$. We have
\begin{gather}
 \big(\tau_{t-1,n+1}^{m}\tau_{t-1,n-1}^{l} - \zeta^{2j-1}\tau_{t-1,n}^{k}\big) \nonumber\\
 \qquad{} \equiv \left( \frac{\tau_{t-2,n+1}^{rk}}{\tau_{t-3,n+1}} \right)^{m}\left( \frac{\tau_{t-2,n-1}^{rk}}{\tau_{t-3,n-1}} \right)^{l}-\zeta^{2j-1} \left( \frac{\tau_{t-2,n+1}^{rm}\tau_{t-2,n-1}^{rl}}{\tau_{t-3,n}} \right)^{k} \nonumber \\
\qquad{} \equiv \frac{\tau_{t-2,n+1}^{rkm} \tau_{t-2,n-1}^{rlk} }{ \tau_{t-3,n+1}^m \tau_{t-3,n-1}^l \tau_{t-3,n}^k }\big({-}\zeta^{2j-1}\big)\big(\tau_{t-3,n+1}^m \tau_{t-3,n-1}^l -\zeta^{2r-2j+1}\tau_{t-3,n}^k\big). \label{lemma1eq}
\end{gather}
By the induction hypothesis (the third property in Theorem~\ref{mainthm}) that $p_{t-2,n}^{(r-j+1)}$ divides
\begin{gather*} \tau_{t-3,n+1}^m \tau_{t-3,n-1}^l -\zeta^{2(r-j+1)-1}\tau_{t-3,n}^k,\end{gather*} we conclude that equation~\eqref{lemma1eq} is equal to~$0$ modulo $p_{t-2,n}^{(r-j+1)}$.
\end{proof}

\begin{Lemma} \label{lemmataut-4}
\begin{gather*}
\tau_{t-1,n+1}^{rm}\tau_{t-1,n-1}^{rl}+\tau_{t-1,n}^{rk}\equiv \frac{\tau_{t-2,n+1}^{r^2 km}\tau_{t-2,n-1}^{r^2 k l}}{\tau_{t-3,n+1}^{rm} \tau_{t-3,n-1}^{rl} \tau_{t-3,n}^{rk}}\tau_{t-4,n}\tau_{t-2,n},
\end{gather*}
where $\equiv$ is taken as modulo $\tau_{t-2,n}^2$.
\end{Lemma}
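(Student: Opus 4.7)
The plan is to treat this as a direct algebraic identity in the Laurent polynomial ring $\mathbb{C}[\{\tau_{0,n}^{\pm},\tau_{1,n}^{\pm}\}_{n\in\mathbb{Z}}]$ by repeatedly applying the evolution equation~\eqref{cp1dtoda}. Because the exponents $rl$, $rk$ are all $\ge 2$ (using $r\ge 2$ and $l,k\ge 1$), any monomial containing $\tau_{t-2,n}^{rl}$ or $\tau_{t-2,n}^{rk}$ vanishes modulo $\tau_{t-2,n}^2$, which is what makes the binomial expansions collapse to a single leading term. The role of the $(t-2,n)$-instance of the recurrence will be to re-introduce one explicit factor of $\tau_{t-2,n}$ on the right-hand side via the identity $\tau_{t-2,n}\tau_{t-4,n}=\tau_{t-3,n+1}^{rm}\tau_{t-3,n-1}^{rl}+\tau_{t-3,n}^{rk}$.

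Concretely, I would first rewrite each of $\tau_{t-1,n+1}$, $\tau_{t-1,n-1}$, and $\tau_{t-1,n}$ using \eqref{cp1dtoda} in terms of $\tau_{t-2,\cdot}$ and $\tau_{t-3,\cdot}$. Then, raising to the appropriate powers, I would expand by the binomial theorem and reduce modulo $\tau_{t-2,n}^2$. For the off-site iterates one obtains
\begin{gather*}
\tau_{t-1,n+1}^{rm}\equiv \frac{\tau_{t-2,n+1}^{r^{2}km}}{\tau_{t-3,n+1}^{rm}},\qquad
\tau_{t-1,n-1}^{rl}\equiv \frac{\tau_{t-2,n-1}^{r^{2}kl}}{\tau_{t-3,n-1}^{rl}},
\end{gather*}
because the cross-term carrying $\tau_{t-2,n}^{rl}$ is killed by the modulus. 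For the on-site term the analogous reduction of $(\tau_{t-2,n+1}^{rm}\tau_{t-2,n-1}^{rl}+\tau_{t-2,n}^{rk})^{rk}$ yields
\begin{gather*}
\tau_{t-1,n}^{rk}\equiv \frac{\tau_{t-2,n+1}^{r^{2}km}\tau_{t-2,n-1}^{r^{2}kl}}{\tau_{t-3,n}^{rk}},
\end{gather*}
again modulo $\tau_{t-2,n}^2$, since the higher terms all carry $\tau_{t-2,n}^{rk}$ with $rk\ge 2$.

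Summing the two contributions, the common numerator $\tau_{t-2,n+1}^{r^{2}km}\tau_{t-2,n-1}^{r^{2}kl}$ factors out, and after combining the remaining fractions over the common denominator $\tau_{t-3,n+1}^{rm}\tau_{t-3,n-1}^{rl}\tau_{t-3,n}^{rk}$, the numerator becomes $\tau_{t-3,n+1}^{rm}\tau_{t-3,n-1}^{rl}+\tau_{t-3,n}^{rk}$. Invoking \eqref{cp1dtoda} at $(t-2,n)$ replaces this by $\tau_{t-4,n}\tau_{t-2,n}$, giving exactly the claimed right-hand side. The main subtlety — really the only one — is to verify carefully that every omitted binomial term is actually divisible by $\tau_{t-2,n}^2$; this uses nothing more than $rl\ge 2$ and $rk\ge 2$, but one must be attentive that the cross-term $rm\cdot\tau_{t-2,n+1}^{rk(rm-1)}\tau_{t-2,n+2}^{rm}\tau_{t-2,n}^{rl}$ (and its analogue for the $n-1$ site) indeed vanishes modulo $\tau_{t-2,n}^{2}$, which is immediate since $rl\ge 2$. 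No induction hypothesis from Theorem~\ref{mainthm} is required: the lemma is a purely formal consequence of the recurrence applied three times.
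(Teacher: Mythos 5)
Your proof is correct and is essentially the computation the paper itself invokes: rewrite each $\tau_{t-1,\cdot}$ via \eqref{cp1dtoda}, discard all binomial terms divisible by $\tau_{t-2,n}^2$, factor out the common numerator, and apply the recurrence at $(t-2,n)$ in the form $\tau_{t-2,n}\tau_{t-4,n}=\tau_{t-3,n+1}^{rm}\tau_{t-3,n-1}^{rl}+\tau_{t-3,n}^{rk}$ to obtain the stated right-hand side. One tiny correction: the discarded term at the $n-1$ site carries $\tau_{t-2,n}^{rm}$, so the relevant bound there is $rm\ge 2$ rather than $rl\ge 2$ --- harmless, since $r\ge 2$ and $m,l,k\ge 1$ make all three exponents at least $2$.
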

It is immediately obtained by a direct computation using
\begin{gather*} \tau_{t-2,n} \tau_{t-4,n}=\tau_{t-3,n+1}^{rm}\tau_{t-3,n-1}^{rl}+\tau_{t-3,n}^{rk}.\end{gather*}

\begin{Lemma}\label{lemma53}
For $t\ge 4$, we have the following properties on $p_{t,n}$ and $q_{t,n}$:
\begin{enumerate}\itemsep=0pt
\item[$(a)$] \begin{gather*}
q_{t,n}=\frac{{\rm LCM} \big(q_{t-1,n+1}^{rm} q_{t-1,n-1}^{rl}, q_{t-1,n}^{rk}\big)}{q_{t-2,n}},\qquad p_{t,n}=\frac{h_n p_{t-1,n+1}^{rm} p_{t-1,n-1}^{rl} +h_n' p_{t-1,n}^{rk}}{p_{t-2,n}},
\end{gather*}
where
\begin{gather*}
h_n=\frac{{\rm LCM}\big(q_{t-1,n+1}^{rm}q_{t-1,n-1}^{rl}, q_{t-1,n}^{rk}\big)}{q_{t-1,n+1}^{rm} q_{t-1,n-1}^{rl}},\qquad h_n'=\frac{{\rm LCM}\big(q_{t-1,n+1}^{rm}q_{t-1,n-1}^{rl}, q_{t-1,n}^{rk}\big)}{q_{t-1,n}^{rk}}.
\end{gather*}

\item[$(b)$] The iterate $q_{t,n}$ is divisible by all the three iterates $q_{t-1,n}$, $q_{t-1,n-1}$ and $q_{t-1,n+1}$.

\item[$(c)$] $p_{t,n}$, $t\ge 2$, is not divisible by any of the initial variables $\tau_{0,m}$, $\tau_{1,m}$, $m\in\mathbb{Z}$.\end{enumerate}
\end{Lemma}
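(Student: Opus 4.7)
\textbf{Plan for Lemma~\ref{lemma53}.} The plan is to prove (a), (b), (c) by a simultaneous induction on $t$ interleaved with the induction for Theorem~\ref{mainthm}, using Lemmas~\ref{dividelemma} and~\ref{lemmataut-4} where needed.

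For (a), I will substitute $\tau_{s,n}=p_{s,n}/q_{s,n}$ into the defining recurrence and put the right-hand side over the common monomial denominator $L:=\mathrm{LCM}(q_{t-1,n+1}^{rm}q_{t-1,n-1}^{rl},q_{t-1,n}^{rk})$, producing
\begin{gather*}
\frac{p_{t,n}\,p_{t-2,n}}{q_{t,n}\,q_{t-2,n}}=\frac{h_n\,p_{t-1,n+1}^{rm}p_{t-1,n-1}^{rl}+h_n'\,p_{t-1,n}^{rk}}{L}.
\end{gather*}
To read off the claimed $p_{t,n}$ and $q_{t,n}$ I need three ingredients: (i)~$q_{t-2,n}\mid L$, which comes from (b) applied at time $t-1$, since $q_{t-2,n}\mid q_{t-1,n}\mid q_{t-1,n}^{rk}\mid L$; (ii)~$p_{t-2,n}$ divides the numerator, which is the Laurent property of~\eqref{cp1dtoda}; and (iii)~the resulting numerator and denominator are coprime, which follows from $\gcd(h_n,h_n')=1$ (immediate from the LCM construction) together with the inductive (c) applied to the $p_{s,n}$ on the right-hand side.

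For (b), I will apply the $\tau$-adic valuation $v:=v_\tau$ (for each initial variable $\tau$) to the monomial identity $q_{t,n}\,q_{t-2,n}=L$ obtained from~(a). Because $v(L)\ge rk\,v(q_{t-1,n})$ and the inductive~(b) at time $t-1$ gives $v(q_{t-2,n})\le v(q_{t-1,n})$, I conclude
\begin{gather*}
v(q_{t,n})\ge(rk-1)\,v(q_{t-1,n})\ge v(q_{t-1,n}),
\end{gather*}
since $rk\ge 2$. The corresponding inequalities for $q_{t-1,n\pm 1}$ follow verbatim from $v(L)\ge rm\,v(q_{t-1,n+1})+rl\,v(q_{t-1,n-1})$ and $rm,rl\ge 2$.

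For (c), I will first dispose of the base cases $t=2,3$ by inspection: specializing any single initial variable to zero in $p_{2,n}=\tau_{1,n+1}^{rm}\tau_{1,n-1}^{rl}+\tau_{1,n}^{rk}$ (resp.\ in the explicit $p_{3,n}$) leaves a nonzero polynomial. For $t\ge 4$ I will use (a) in the cleared form
\begin{gather*}
p_{t,n}\,p_{t-2,n}=h_n\,p_{t-1,n+1}^{rm}p_{t-1,n-1}^{rl}+h_n'\,p_{t-1,n}^{rk},
\end{gather*}
and assume for contradiction that an initial variable $\tau$ divides $p_{t,n}$. By induction $\tau$ divides none of $p_{t-2,n}$, $p_{t-1,n}$, $p_{t-1,n\pm 1}$, so the specialization $\tau=0$ annihilates the left-hand side while keeping every surviving $p_{\ast,\ast}$ nonzero. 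Since $\gcd(h_n,h_n')=1$, the monomial $\tau$ divides at most one of $h_n,h_n'$; when it divides exactly one, the surviving summand is a nonzero monomial times a nonzero polynomial, contradicting the vanishing. The delicate remaining case---where $\tau$ divides neither $h_n$ nor $h_n'$---is the main obstacle: here I plan to invoke Lemma~\ref{lemmataut-4} to reduce the two summands modulo $\tau_{t-2,n}^2$, so that an exact cancellation at $\tau=0$ would force $\tau_{t-4,n}|_{\tau=0}$ to vanish identically, contradicting the inductive~(c) at time $t-4$.
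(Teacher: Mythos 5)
Your outline for (a) and (b) is sound and is in the spirit of what the paper intends (the paper itself only says the argument is ``similar to the one used to prove \eqref{pnqnform}'' and omits all details): the identification $q_{t,n}=L/q_{t-2,n}$, $p_{t,n}=N/p_{t-2,n}$ with $N=h_np_{t-1,n+1}^{rm}p_{t-1,n-1}^{rl}+h_n'p_{t-1,n}^{rk}$ needs exactly the three ingredients you list, and the valuation argument for (b), using the inductive (b) at the neighbouring sites $(t-1,n)$ and $(t-1,n\pm1)$ together with $rk,rm,rl\ge 2$, works (you should still check the $t=3$ data, e.g.\ $q_{3,n}=\tau_{1,n}\tau_{0,n}^{rk}\tau_{0,n+1}^{rm}\tau_{0,n-1}^{rl}$, by hand, since the lemma only starts at $t=4$).

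The genuine gap is in (c), and it also contaminates step (iii) of your argument for (a), since both reduce to the claim that no initial variable $x$ divides $N$. Your dichotomy via $\gcd(h_n,h_n')=1$ only settles the case where $x$ divides exactly one of $h_n$, $h_n'$; but the ``delicate remaining case'' (which includes \emph{every} variable not appearing in the monomials $q_{t-1,n},q_{t-1,n\pm1}$ at all, so it is the generic and essential case) is not resolved by your appeal to Lemma~\ref{lemmataut-4}. That lemma is a congruence modulo $\tau_{t-2,n}^2$, i.e.\ it writes $\tau_{t,n}=M\,\tau_{t-4,n}+A\,\tau_{t-2,n}$ with $M$ an explicit Laurent monomial but $A$ an uncontrolled element of a localization; it is tailored to divisibility by the irreducible factors of $p_{t-2,n}$ (this is how the paper uses it in Lemma~\ref{coprimethree}), not to divisibility by an initial variable. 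If you now assume $x\mid p_{t,n}$ and set $x=0$ with the other variables generic, you get $0=M\tau_{t-4,n}+A\tau_{t-2,n}$ on $\{x=0\}$; since by induction $x\nmid p_{t-2,n}$, the term $A\,\tau_{t-2,n}$ does \emph{not} vanish there (generically $\tau_{t-2,n}$ is nonzero, or infinite if $x\mid q_{t-2,n}$), so you cannot conclude that $\tau_{t-4,n}|_{x=0}$ vanishes identically, and no contradiction with the inductive (c) at time $t-4$ is obtained. Ruling out the identical cancellation $h_np_{t-1,n+1}^{rm}p_{t-1,n-1}^{rl}\equiv -h_n'p_{t-1,n}^{rk}\pmod{x}$ needs a different idea (for instance exploiting a light-cone boundary variable that occurs in $p_{t-1,n+1}$ but in none of $p_{t-1,n}$, $p_{t-1,n-1}$, $h_n$, $h_n'$, or an explicit substitution argument in the style of Lemma~\ref{coprimethree}); note also that the simple device used for \eqref{pnqnform} in Section~2 (evaluating at all initial variables equal to $0$, where $p_n\equiv1$) is unavailable here because every term of $p_{2,n}$ contains initial variables, so the analogy with the one-dimensional case does not carry this step for free.
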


\begin{proof} The discussion is similar to the one used to prove~\eqref{pnqnform}. Details are omitted in the paper.
\end{proof}

Let us remark that the property (c) in Lemma \ref{lemma53} is needed to assure that $p_{t-2,n}$ does not have a monomial factor when $p_{t,n}=\prod\limits_{j=1}^r p_{t,n}^{(j)}$ is factorized as in equation~\eqref{ptnj}. From this observation we conclude that the r.h.s.\ of the second equation in (a) (the recurrence relation of~$p_{t,n}$) is not only a Laurent polynomial in~$R_0$, but also a polynomial.

\begin{Lemma} \label{lemmapqdeg}
We have the following two inequalities for the degrees of the iterates:
\begin{gather}
\deg p_{t,n} > \deg q_{t,n}>0, \label{degpq}\\
\deg p_{t+1,n} > \deg p_{t,n}. \label{degpp}
\end{gather}
\end{Lemma}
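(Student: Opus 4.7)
The plan is to prove both \eqref{degpq} and \eqref{degpp} simultaneously by induction on $t$. By translational symmetry of \eqref{cp1dtoda} and its initial data in $n$, the quantities $D_t:=\deg p_{t,n}$, $d_t:=\deg q_{t,n}$, and $\Delta_t:=D_t-d_t$ are independent of $n$. Direct computation gives the base values $\Delta_0=\Delta_1=1$, $d_2=1$, and $\Delta_2=\lambda-1\ge 3$, where $\lambda:=r\max(m+l,k)\ge 4$.

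For the inductive step I would take the Laurent degree of both sides of $\tau_{t,n}\tau_{t-2,n}=\tau_{t-1,n+1}^{rm}\tau_{t-1,n-1}^{rl}+\tau_{t-1,n}^{rk}$. Assuming no cancellation at top Laurent degree, this yields
\[
\Delta_t+\Delta_{t-2}=\max\bigl(r(m+l),\,rk\bigr)\,\Delta_{t-1}=\lambda\,\Delta_{t-1},\qquad t\ge 2.
\]
Hence $\Delta_t-\Delta_{t-1}=(\lambda-1)\Delta_{t-1}-\Delta_{t-2}\ge \Delta_{t-1}-\Delta_{t-2}$, using $\lambda\ge 2$. Combined with $\Delta_2-\Delta_1\ge 2>0$, induction yields $\Delta_t>\Delta_{t-1}>0$ for every $t\ge 2$, which is the first half of \eqref{degpq}. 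The positivity $d_t>0$ then follows from Lemma~\ref{lemma53}(b) for $t\ge 4$, because $d_t\ge d_{t-1}\ge\dots\ge d_2=1$, and by direct inspection for $t=2,3$. Finally \eqref{degpp} follows from
$D_{t+1}-D_t=(d_{t+1}-d_t)+(\Delta_{t+1}-\Delta_t)>0$,
combining $d_{t+1}\ge d_t$ (Lemma~\ref{lemma53}(b)) with the strict monotonicity of $\Delta_t$ just established.

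The main obstacle is to justify the non-cancellation hypothesis at top Laurent degree between the two summands $\tau_{t-1,n+1}^{rm}\tau_{t-1,n-1}^{rl}$ and $\tau_{t-1,n}^{rk}$. When $m+l\ne k$ the two Laurent degrees differ, the maximum is uniquely attained and cancellation is impossible. In the borderline case $m+l=k$, my plan is to exploit the cone of dependence of \eqref{cp1dtoda}: by an auxiliary induction on $t$, the initial variable $\tau_{1,n+t-1}$ sits at the rightmost tip of the support of $\tau_{t-1,n+1}$, while it is absent from the supports of both $\tau_{t-1,n-1}$ and $\tau_{t-1,n}$. Choosing a monomial order that prioritizes $\tau_{1,n+t-1}$, the leading monomial of $\tau_{t-1,n+1}^{rm}\tau_{t-1,n-1}^{rl}$ carries $\tau_{1,n+t-1}$ with exponent at least $rm\ge 1$, whereas the leading monomial of $\tau_{t-1,n}^{rk}$ carries no $\tau_{1,n+t-1}$ at all. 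The two leading terms therefore live in distinct monomial degrees and cannot cancel, which secures the degree recurrence and closes the induction.
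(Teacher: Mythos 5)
Your plan is correct and follows essentially the same route as the paper: the paper likewise works with the degree difference $d_t=\deg p_{t,n}-\deg q_{t,n}$ (your $\Delta_t$), derives $d_t\ge r\max(m+l,k)\,d_{t-1}-d_{t-2}\ge 4d_{t-1}-d_{t-2}$ from the recurrence and concludes \eqref{degpq} by induction, with \eqref{degpp} then obtained from Lemma~\ref{lemma53}. The only difference is that the paper states the degree recurrence without comment, whereas you explicitly justify the absence of top-degree cancellation (an issue only in the borderline case $m+l=k$) via the cone of dependence and the extreme variable $\tau_{1,n+t-1}$; that refinement is sound (provided you carry the tip variable in the top-degree homogeneous part, not merely in the support, through the auxiliary induction) and supplies the step the paper leaves implicit.
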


\begin{proof} Let us prove \eqref{degpq} by induction. We have
\begin{gather}
\frac{p_{t,n}}{q_{t,n}}=\frac{q_{t-2,n}}{p_{t-2,n}}\left( \frac{p_{t-1,n+1}^{rm}}{q_{t-1,n+1}^{rm}}\frac{p_{t-1,n-1}^{rl}}{q_{t-1,n-1}^{rl}}+\frac{p_{t-1,n}^{rk}}{q_{t-1,n}^{rk}} \right). \label{lemma510eq}
\end{gather}
Since the degrees of $p_{t,n}$ and $q_{t,n}$ are independent of $n$, we can well-define
\begin{gather*}
d_t:=\deg p_{t,n}-\deg q_{t,n}.
\end{gather*}
From~\eqref{lemma510eq}, we have
\begin{gather*}
d_t\ge \max\left( r(m+l)d_{t-1}, rk d_{t-1}\right)-d_{t-2}=r \max(m+l,k)d_{t-1}-d_{t-2}.
\end{gather*}
Since $r\ge 2$, under the induction hypothesis $d_{t-1}>d_{t-2}$, we have
\begin{gather*}
d_t\ge 4d_{t-1}-d_{t-2}>3d_{t-1}>d_{t-1}.
\end{gather*}
The equation \eqref{degpp} is readily obtained using Lemma~\ref{lemma53}.\end{proof}

\begin{Lemma} \label{lemmadegtau} Let us define the degree of $\tau_{t,n}$ as a rational function of the initial variab\-les~$\tau_{0,n}$, $\tau_{1,n}$, $n\in\mathbb{Z}$, as $\deg \tau_{t,n}$. We have that $\deg \tau_{t,n}$ does not depend on~$n$, and is monotonously increasing with respect to~$t$.
\end{Lemma}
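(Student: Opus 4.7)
The plan is to reduce everything to known facts about $p_{t,n}$ and to invoke translation invariance of the recurrence. First I observe that Lemma~\ref{lemmapqdeg} gives $\deg p_{t,n} > \deg q_{t,n} \geq 0$, and since $p_{t,n}/q_{t,n}$ is already in lowest terms (with $q_{t,n}$ a monic monomial), the standard convention $\deg(p/q) = \max(\deg p, \deg q)$ for a rational function yields $\deg \tau_{t,n} = \deg p_{t,n}$. The lemma therefore reduces to two claims about $\deg p_{t,n}$: invariance under the spatial index and strict monotonicity in $t$.

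For the $n$-independence I would exploit the spatial shift symmetry of the lattice. Consider the ring automorphism $\sigma_k \colon R_0 \to R_0$ acting on the generators by $\sigma_k(\tau_{0,n}) = \tau_{0,n+k}$ and $\sigma_k(\tau_{1,n}) = \tau_{1,n+k}$. Because the recurrence~\eqref{cp1dtoda} is manifestly translation-invariant in the spatial variable $n$, an easy induction on $t$ (comparing $\sigma_k$ applied to the defining relation for $\tau_{t,n}$ against the defining relation for $\tau_{t,n+k}$) gives $\sigma_k(\tau_{t,n}) = \tau_{t,n+k}$ for all $t \geq 0$. Since $\sigma_k$ sends each monomial to a monomial of identical total degree, it preserves the degree of any polynomial and hence the degree of any rational function written in reduced form in the field of fractions. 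Therefore $\deg \tau_{t,n} = \deg \tau_{t,n+k}$ for every $k \in \mathbb{Z}$, proving the first claim.

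The monotonicity in $t$ is then just a restatement of inequality~\eqref{degpp} in Lemma~\ref{lemmapqdeg}: $\deg \tau_{t+1,n} = \deg p_{t+1,n} > \deg p_{t,n} = \deg \tau_{t,n}$. The whole argument is essentially bookkeeping with the groundwork already laid by Lemma~\ref{lemmapqdeg} and the translational symmetry of the initial data; I do not expect a substantive obstacle. The only point requiring mild care is to note that each $\tau_{t,n}$ involves only finitely many initial variables (those with spatial index within distance roughly $t$ of $n$), so that the notion of its degree as a rational function is well-posed and obviously unaffected by the action of $\sigma_k$.
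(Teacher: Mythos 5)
Your proposal is correct and follows essentially the same route as the paper, which simply notes that the lemma is immediately obtained from Lemma~\ref{lemmapqdeg}: the inequality $\deg p_{t,n}>\deg q_{t,n}$ identifies $\deg\tau_{t,n}$ with $\deg p_{t,n}$, and \eqref{degpp} gives the monotonicity in $t$. Your explicit shift-automorphism argument for the $n$-independence just spells out the translation invariance that the paper takes for granted (it is already used implicitly when $d_t$ is defined in the proof of Lemma~\ref{lemmapqdeg}), so there is nothing to object to.
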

Lemma~\ref{lemmadegtau} is immediately obtained from Lemma~\ref{lemmapqdeg}.

\begin{Lemma} \label{degmlk} The degree
\begin{gather*}
\deg_L \big(\tau_{t,n}^m \tau_{t,n-2}^l -\zeta^{2j-1}\tau_{t,n-1}^k\big)
\end{gather*}
is independent of $j\in\mathbb{Z}$.
\end{Lemma}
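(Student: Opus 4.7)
The plan is to write $\tau_{t,n} = p_{t,n}/q_{t,n}$ in lowest terms, bring
\[ f_j := \tau_{t,n}^m\tau_{t,n-2}^l - \zeta^{2j-1}\tau_{t,n-1}^k \]
to common-denominator form, and then compute $\deg_L f_j$ as the degree of its polynomial numerator in lowest terms, reading off the independence from $j$ by comparing the degrees---and, in the borderline case, the variable supports---of the two constituent polynomial terms.

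Writing $f_j = N_j/D$ with $D := \mathrm{lcm}(q_{t,n}^m q_{t,n-2}^l,\, q_{t,n-1}^k)$ independent of $j$ and
\[ N_j = p_{t,n}^m p_{t,n-2}^l\,(q_{t,n-1}^k/M)\; -\; \zeta^{2j-1}\, p_{t,n-1}^k\,(q_{t,n}^m q_{t,n-2}^l/M), \]
where $M := \gcd(q_{t,n-1}^k,\, q_{t,n}^m q_{t,n-2}^l)$, the two monomial cofactors $q_{t,n-1}^k/M$ and $q_{t,n}^m q_{t,n-2}^l/M$ are coprime. Since $p_{t,n}$, $p_{t,n-1}$, $p_{t,n-2}$ carry no monomial factor by Lemma~\ref{lemma53}(c), a variable-by-variable check shows that $N_j$ itself has no monomial factor, so $\deg_L f_j = \deg N_j$. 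The two polynomial summands of $N_j$ have degrees differing by $(m+l-k)\,d_t$, with $d_t := \deg p_{t,n} - \deg q_{t,n} > 0$ by Lemma~\ref{lemmapqdeg}; when $m+l \neq k$ the degrees are distinct and $\deg N_j$ equals the larger one, which is manifestly independent of $j$.

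The borderline case $m+l = k$ is the delicate one. Here both summands of $N_j$ have the same degree, and one must rule out cancellation of their top-degree parts for every $\zeta^{2j-1}$. The structural input is that when $m+l = k$ the recurrence of Lemma~\ref{lemma53}(a) preserves homogeneity, so $p_{t,n}$ is homogeneous of degree $\deg p_{t,n}$ (a short induction on $t$, using that the two monomial weights $h_n,h_n'$ have equal degrees when $m+l=k$). Moreover, $p_{t,n}$'s variable support extends to the ``rightmost'' initial variable $\tau_{1,n+t-1}$, whereas $p_{t,n-1}$ only reaches $\tau_{1,n+t-2}$, so the first summand of $N_j$ carries a monomial involving $\tau_{1,n+t-1}$ which is absent from the second. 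No value of $\zeta^{2j-1}$ can cancel such a monomial, and $\deg N_j$ once again equals the common degree of the two summands, independent of $j$.

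The main obstacle is the borderline case, where the two inductive claims---homogeneity of $p_{t,n}$ and persistence of $\tau_{1,n+t-1}$ in its support---must be set up alongside the main induction of Theorem~\ref{mainthm}; both should be short, using that $p_{t-2,n}$'s rightmost variable only reaches $\tau_{1,n+t-3}$, so division by $p_{t-2,n}$ cannot remove $\tau_{1,n+t-1}$ from $p_{t,n}$.
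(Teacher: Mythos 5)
Your proof is essentially correct, but it takes a genuinely different route from the paper's (which is itself only a two-line sketch ending in ``Details are omitted here''). The paper reduces the lemma to the non-cancellation of the highest (and lowest) terms of $\tau_{t,n}^m\tau_{t,n-2}^l$ against those of $\zeta^{2j-1}\tau_{t,n-1}^k$, and splits according to the arithmetic nature of $\zeta^{2j-1}$: for $\zeta^{2j-1}\notin\mathbb{Q}$ cancellation is impossible because the iterates have rational coefficients, and the one remaining value $\zeta^{2j-1}=-1$ (which occurs only for odd $r$) is treated separately. You instead split on the exponents: the two polynomial summands of the common-denominator numerator $N_j$ have degrees differing by $(m+l-k)d_t$ with $d_t>0$ by Lemma~\ref{lemmapqdeg}, so for $m+l\neq k$ the maximum is visibly $j$-independent, and only the borderline case $m+l=k$ requires a cancellation argument, which you supply via homogeneity of $p_{t,n}$ together with the persistence of the corner variable $\tau_{1,n+t-1}$ in the first summand but not the second. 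Your split is uniform in $j$ and avoids any appeal to irrationality or positivity; the price is two auxiliary inductive claims (homogeneity and corner-variable support) that must be carried alongside the main induction of Theorem~\ref{mainthm}. Both are true and provable by short inductions of the kind the paper already uses elsewhere (compare the argument with $X=\tau_{1,n-t+3}$ in the proof of Lemma~\ref{coprimethree}).

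Two steps need more care than you give them. First, the assertion that ``a variable-by-variable check shows that $N_j$ has no monomial factor'' only works as stated for variables dividing one of the two coprime monomial cofactors: if $x$ divides, say, $q_{t,n-1}^k/M$, then $x\mid N_j$ would force $x\mid p_{t,n-1}^k\cdot\big(q_{t,n}^mq_{t,n-2}^l/M\big)$, contradicting coprimeness of the cofactors and Lemma~\ref{lemma53}(c). But for a variable $x$ dividing neither cofactor, reducing modulo $x$ gives $\big(p_{t,n}^mp_{t,n-2}^l\big)\big|_{x=0}\,u=\zeta^{2j-1}\big(p_{t,n-1}^k\big)\big|_{x=0}\,v$, which yields a contradiction immediately only when $\zeta^{2j-1}\notin\mathbb{Q}$; for $\zeta^{2j-1}=-1$ you need an extra input (e.g., positivity of coefficients), so this corner of your argument quietly reintroduces the very dichotomy the paper uses. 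Second, in the borderline case you must also verify that the monomial cofactor $q_{t,n}^mq_{t,n-2}^l/M$ of the second summand is free of $\tau_{1,n+t-1}$, i.e., that the denominator $q_{t,n}$ does not contain the corner variable; this holds (the corner variable enters the recursion only through $\tau_{t-1,n+1}$ and only with non-negative exponents, so it never migrates into the denominator), but it is a third inductive claim beyond the two you list, since the support of $p_{t,n}$ alone does not control the cofactors.
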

It is sufficient to show that there is no cancellation of the highest (and the lowest) terms in~$\tau_{t,n}^m \tau_{t,n-2}^l$ and~$\zeta^{2j-1}\tau_{t,n-1}^k$, which shall be proved for $\zeta^{2j-1}\not\in \mathbb{Q}$ and $\zeta^{2j-1}=-1$ respectively. Details are omitted here. The following Lemma~\ref{coprimethree} is the key to the proof of Theorem~\ref{mainthm}.
\begin{Lemma} \label{coprimethree}
The iterate $\tau_{t-2,n}$ is coprime with every iterate in $\{ \tau_{t-3,n}, \tau_{t-4,n} \}_{n\in\mathbb{Z}}$.
\end{Lemma}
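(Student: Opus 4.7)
The plan is to recognize that Lemma~\ref{coprimethree} is essentially a direct consequence of the induction hypothesis rather than an independent assertion. Under the blanket assumption that properties $1$ through $4$ of Theorem~\ref{mainthm} hold for all $p_{s,n}$ with $s\le t-1$, the fourth property in particular asserts that any two distinct iterates $\tau_{s,n}$ and $\tau_{s',n'}$ with $s,s'\le t-1$ and $(s,n)\neq (s',n')$ are coprime as Laurent polynomials in $R_0$.

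First I would apply this observation to the pair $\tau_{t-2,n}$ and $\tau_{t-3,n'}$: both time indices are at most $t-2\le t-1$, and they differ by $1$, so the index pairs $(t-2,n)$ and $(t-3,n')$ are automatically distinct, regardless of $n,n'\in\mathbb{Z}$. The induction hypothesis therefore gives coprimeness. I would then repeat the same argument for $\tau_{t-2,n}$ and $\tau_{t-4,n'}$, using $t-4\le t-2\le t-1$ and $t-2\neq t-4$.

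For the boundary value $t=4$, the iterate $\tau_{t-4,n'}=\tau_{0,n'}$ is one of the initial variables, which is a unit in $R_0$ and therefore trivially coprime with every nonzero element. The same remark covers any case where $t-3$ or $t-4$ equals $0$ or $1$. I would dispose of these cases as a separate, immediate verification.

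The main obstacle is essentially non-existent: the lemma is really a repackaging of a portion of the induction hypothesis in a form convenient for the induction step that establishes Theorem~\ref{mainthm} at level $t$. Its designation as ``the key'' reflects its pivotal role in justifying the reductions modulo $\tau_{t-2,n}$ that appear in Lemmas~\ref{dividelemma} and~\ref{lemmataut-4}, rather than any intrinsic difficulty of the present proof.
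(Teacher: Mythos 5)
Your proposal reduces the lemma to the induction hypothesis applied at level $t-2$, and that is precisely what the paper's proof must not, and does not, do. Look at how the lemma is actually used in the proof of Theorem~\ref{mainthm}: in the case $t\ge 4$ it is invoked as ``$\tau_{t,n}$ is coprime with every element of $\{\tau_{t-1,n},\tau_{t-2,n}\}_{n\in\mathbb{Z}}$'', i.e., with the indices shifted by two, for the freshly computed iterate $\tau_{t,n}$, whose factorization and coprimeness properties are exactly what the induction step is trying to establish and are therefore \emph{not} available from the assumption that properties 1--4 hold for $s\le t-1$. This coprimeness is what allows Lemma~\ref{UFDlemma} to be applied in the localization argument yielding $\Omega_{R_0}(\tau_{t,n})\le r$; if Lemma~\ref{coprimethree} were merely the repackaging you describe, that step would be unsupported and the whole induction circular. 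Consistently with this, the paper's proof of the lemma never touches property 4 at the top level: it uses the induction hypothesis only for the factors $p_{t-3,m}^{(j)}$ and $p_{t-4,m}^{(j)}$ (two levels below the iterate in question), and then proves non-divisibility directly, by exhibiting, for each of the eight relevant neighbouring iterates, initial data at which the chosen irreducible factor vanishes while the top iterate does not (the $\tau_{t-4,n}$ case via the congruence of Lemma~\ref{lemmataut-4}, the shifted cases via the dependence on variables such as $\tau_{1,n-t+3}$). That explicit specialization argument is the actual content of the lemma and is entirely absent from your proposal; the base case of the proof, which establishes that $\tau_{3,n}$ and $\tau_{2,m}$ are coprime, already shows that the intended statement concerns an iterate lying above the range where the factorization hypothesis is used.

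A secondary point: the ``pivotal role'' you assign to the lemma is also misplaced. Lemmas~\ref{dividelemma} and~\ref{lemmataut-4} are congruence computations that do not rest on Lemma~\ref{coprimethree}; its role is the $\Omega$-count described above. Your observation that the literal index combination $(t-2;\,t-3,t-4)$ sits inside the range of the standing assumption is an understandable reaction to the paper's indexing, but a proof that simply cites the hypothesis at level $t-2$ establishes only those instances of the statement that are never needed, and none of the instances for which the lemma exists.
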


\begin{proof}
Let us split the proof in the two cases: $t=3$ and $t\ge 4$.

{\bf The case of $\boldsymbol{t=3}$.} It is sufficient to prove that $\tau_{3,n}$ and $\tau_{2,m}$ are coprime with each other for arbitrary $n,m\in\mathbb{Z}$. For this purpose it is sufficient to show that $\tau_{3,n}$ is not divisible by~$p_{2,m}^{(j)}$ for all $m\in\mathbb{Z}$ and $j\in\{1,2,\dots,r\}$. Since $\tau_{2,m}$ with $m\neq n\pm 1,n$ does not share a factor with~$\tau_{3,n}$, $p_{2,m}^{(j)}$ does not divide $\tau_{3,n}$ for $m\neq n\pm 1,n$. We shall prove the cases of $m=n\pm 1, n$. If $p_{2,n}^{(j)}=0$, then we have $\tau_{3,n}=\frac{\tau_{2,n-1}^{rl} \tau_{2,n+1}^{rm}}{\tau_{1,n}}$. Thus it is possible for us to assign suitable values to the initial data so that we have $p_{2,n}^{(j)}=0$ and at the same time $\tau_{3,n}\neq 0$.\footnote{Since $\tau_{2,m}$'s are coprime with each other, it is possible to achieve that $p_{2,m}^{(j)}=0$ and at the same time $\tau_{2,n\pm 1}\neq 0$.} Therefore $p_{2,n}^{(j)}$ does not divide~$\tau_{3,n}$. The case of $m=n+1$ is proved in the same manner since $\tau_{3,n}=\frac{\tau_{2,n}^{rk}}{\tau_{1,n}}$ if~$p_{2,n+1}^{(j)}=0$. The case of $m=n-1$ is readily obtained from the symmetries of the equation.

{\bf The case of $\boldsymbol{t\ge 4}$.} When we calculate the iterate $\tau_{t-2,n}$, we use the following eight iterates in $\{\tau_{t-3,m}, \tau_{t-4,m}\, |\, m\in\mathbb{Z}\}$:
\begin{gather}
\tau_{t-3,n}, \ \tau_{t-3,n\pm1}, \ \tau_{t-4,n}, \ \tau_{t-4,n\pm 1}, \ \tau_{t-4,n\pm 2}. \label{eightiterate}
\end{gather}
It is sufficient to prove the coprimeness of these eight iterates with~$\tau_{t-2,n}$. We shall prove that~$\tau_{t-2,n}$ is not divisible by any~$p_{s,m}^{(j)}$ that is a factor of~\eqref{eightiterate}.

{\bf $\boldsymbol{\tau_{t-2,n}}$ is not divisible by $\boldsymbol{p_{t-4,n}^{(j)}}$.} From Lemma~\ref{lemmataut-4}, we obtain
\begin{gather*}
\tau_{t-2,n}\tau_{t-4,n}\equiv \frac{\tau_{t-4,n+1}^{r^2 km}\tau_{t-4,n-1}^{r^2 k l}}{\tau_{t-5,n+1}^{rm} \tau_{t-5,n-1}^{rl} \tau_{t-5,n}^{rk}}\tau_{t-6,n}\tau_{t-4,n},
\end{gather*}
where $\equiv$ is taken as modulo $\tau_{t-4,n}^2$. Dividing the both sides by $\tau_{t-4,n}$, and taking modulo $p_{t-4,n}^{(j)}$ we have that $p_{t-4,n}^{(j)}$ does not divide $\tau_{t-2,n}$. On the other hand $p_{t-4,n}^{(j)}$ is irreducible from the induction hypothesis. Thus~$\tau_{t-2,n}$ is coprime with~$\tau_{t-4,n}$.

{\bf $\boldsymbol{\tau_{t-2,n}}$ is not divisible by $\boldsymbol{p_{t-3, n\pm 1}^{(j)}=0}$.} It is sufficient to show that for a fixed integer $j\in\{1,2,\dots,r\}$, there exists a set of non-zero initial values such that $\tau_{t-2,n}\neq 0$ and at the same time $p_{t-3,n-1}^{(j)}=0$.

The initial variable $X:=\tau_{1,n-t+3}$ is included in the expansion of $\tau_{t-3,n-1}$, but is not included in either of the expansion of $\tau_{t-3,n}$ or $\tau_{t-4,n}$. The factor $p_{t-3,n-1}^{(j)}$ is a polynomial of $X$ but is not a monomial with respect to $X$. Let us express $p_{t-3,n-1}^{(j)}$ as
\begin{gather*}
p_{n-3,t-1}^{(j)}=\sum_k a_k(\boldsymbol{\tau}) X^k,
\end{gather*}
where $a_k(\boldsymbol{\tau})$ is a Laurent polynomial of the initial variables other than $X$. There exist non-zero initial values $\widetilde{\boldsymbol{\tau}}$ such that $a_k(\widetilde{\boldsymbol{\tau}}) \neq 0$ ($\forall\, k$), $\tau_{t-3,n}\neq 0$ and $\tau_{t-4,n}\neq 0$. Since $p_{n-3,t-1}^{(j)}$ is not a~monomial of $X$, the algebraic equation with respect to $X$
\begin{gather*}
\sum_k a_k(\widetilde{\boldsymbol{\tau}}) X^k=0
\end{gather*}
has a non-zero root $\widetilde{X}$. By taking $\widetilde{\boldsymbol{\tau}}$, $\widetilde{X}$ as the initial values we have $\tau_{t-2,n}=\tau_{t-3,n}^{rk}/\tau_{t-4,n}$, which is non-zero by construction.

Proof of the case $p_{t-3, n+1}^{(j)}=0$ is done in the same manner.

{\bf $\boldsymbol{\tau_{t-2,n}}$ is not divisible by $\boldsymbol{p_{t-3,n}^{(j)}=0}$.} Let us fix an integer $j\in\{1,2,\dots,r\}$ and show that we attain $\tau_{t-2,n}\neq 0$ and $p_{t-3,n}^{(j)}=0$ simultaneously for a~set of initial values. When $p_{t-3,n}^{(j)}=0$ we have $\tau_{t-3,n}=0$ and
\begin{gather*}
\tau_{t-2,n}=\frac{\tau_{t-3,n-1}^{rl}\tau_{t-3,n+1}^{rm}}{\tau_{t-4,n}}.
\end{gather*}
The iterate $\tau_{t-3,n}$ depends only on
\begin{gather*}
\tau_{0,m},\quad m\ge n-t+5,\qquad \tau_{1,m'}, \quad m'\ge n-t+4,
\end{gather*}
among the initial variables. Since the iterate $\tau_{t-2,n}$ is a polynomial of $X=\tau_{1,n-t+3}$, by choosing a value of~$X$ avoiding the zeros of $\tau_{t-2,n}$ as a polynomial of~$X$, $\tau_{t-2,n}$ becomes non-zero, under the condition that $p_{t-3,n}^{(j)}=0$.

{\bf $\boldsymbol{\tau_{t-2,n}}$ is not divisible by $\boldsymbol{p_{t-4,n-2}^{(j)}=0}$.} We shall prove that for a fixed integer $j\in\{1,2,\dots, r\}$ there exists at least one set of initial values such that $\tau_{t-2,n}\neq 0$ and $p_{t-4,n-2}^{(j)}=0$. When $p_{t-4,n-2}^{(j)}=0$, naturally $\tau_{t-4,n-2}=0$ and we have
\begin{gather*}
\tau_{t-3,n-1}=\frac{\tau_{t-4,n-1}^{rk}}{\tau_{t-5,n-1}},\qquad \tau_{t-2,n}=\frac{ \tau_{t-3,n}^{rk}\tau_{t-5,n-1}^{rl} + \tau_{t-3.n+1}^{rm}\tau_{t-4,n-1}^{r^2 kl } }{\tau_{t-4,n}\tau_{t-5,n-1}^{rl}}.
\end{gather*}
The eight iterates $\{\tau_{0,t+n-\alpha-1}, \tau_{1,t+n-\alpha}\}$, $\alpha=3,4,5,6$, are used to define $\tau_{t-2,n}$, but not used to define~$\tau_{t-4,n-2}$ (or~$p_{t-4,n-2}^{(j)}$). The iterate $\tau_{t-3,n+1}$ becomes~$0$ by choosing the topmost (in the~$t$-$n$ plane) two iterates $\tau_{0,t+n-4}$ and $Y=\tau_{1,t+n-3}$ among the above eight iterates appropriately, since $\tau_{t-3,n+1}=0$ is written down as a polynomial of~$Y$. In this case we have $\tau_{t-3,n+1}=\tau_{t-3,n}^{rk}/\tau_{t-4,n}$, which becomes non-zero by assigning appropriate values to the remaining six iterates $\tau_{0,t+n-\alpha-1}$ and $\tau_{1,t+n-\alpha}$, $\alpha=4,5,6$. Therefore we can achieve $\tau_{t-2,n}\neq 0$ preserving the condition that~$p_{t-4,n-2}^{(j)}=0$.

The proof of the cases $p_{t-4,n-1}^{(j)}$, $p_{t-4,n+1}^{(j)}$ and $p_{t-4,n+2}^{(j)}$ can be done in the same manner since the three iterates $\tau_{t-4,n-1}$, $\tau_{t-4,n+1}$, $\tau_{t-4,n+2}$ are defined {\em without} using the iterate $X=\tau_{1,n-t+3}$.
\end{proof}

\begin{proof}[Proof of Theorem \ref{mainthm}]

{\bf The case of $\boldsymbol{t=2}$.}
\begin{gather*}
\tau_{2,n}=\frac{1}{\tau_{0,n}}\prod_{j=1}^r \big( \tau_{1,n+1}^m \tau_{1,n-1}^l -\zeta^{2j-1} \tau_{1,n}^k \big)
\end{gather*}
trivially satisfies all the four conditions in Theorem~\ref{mainthm} if we take
\begin{gather*}
q_{2,n}=\tau_{0,n},\qquad p_{2,n}^{(j)}=\tau_{1,n+1}^m \tau_{1,n-1}^l -\zeta^{2j-1} \tau_{1,n}^k.
\end{gather*}

{\bf The case of $\boldsymbol{t=3}$.} Let us prove that
\begin{gather}
\Omega_{R_0}(\tau_{3,n})=r, \label{tau3n}
\end{gather}
where $\Omega_{*}$ specifies the number of prime elements in a unique factorization domain ``$*$'' as explained in the appendix. From the induction hypothesis, we have
\begin{gather*}
\Omega_{R_1}(\tau_{3,n})=r.
\end{gather*}
Since we have the following equality between two localized rings:
\begin{gather*}
R_0 \big[\big\{\tau_{2,n}^{-1}\big\}\big]= R_1 \big[\big\{\tau_{0,n}^{-1}\big\}\big],
\end{gather*}
we have
\begin{gather*}
\Omega_{R_0[\{\tau_{2,n}^{-1}\}]}(\tau_{3,n})\le r
\end{gather*}
from Lemma \ref{UFDlemma}. Since $\tau_{3,n}$ is coprime with $\tau_{2,n'}$ for every integer $n'$, we have
\begin{gather*}
\Omega_{R_0} (\tau_{3,n})=\Omega_{R_0[\{\tau_{2,n}^{-1}\}]}(\tau_{3,n}).
\end{gather*}
Thus $\Omega_{R_0}(\tau_{3,n})\le r$.
On the other hand, from the expression
\begin{gather*}
\tau_{3,n}=\frac{1}{\tau_{1,n}}\prod_{j=1}^r \big( \tau_{2,n+1}^m \tau_{2,n-1}^l -\zeta^{2j-1} \tau_{2,n}^k \big),
\end{gather*}
we have $\Omega_{R_0}(\tau_{3,n})\ge r$. The equality~\eqref{tau3n} indicates that the decomposition of $\tau_{3,n}$ into prime elements is written in the form of
\begin{gather*}
\tau_{3,n}=u\times \prod_{j=1}^r \big( \tau_{2,n+1}^m \tau_{2,n-1}^l -\zeta^{2j-1} \tau_{2,n}^k \big),
\end{gather*}
where $u$ is a unit element in $R_0$. By eliminating the denominators from each factor of the r.h.s., we obtain $p_{3,n}^{(j)}$. The degree $\deg p_{3,n}^{(j)}$ is independent of $j$ from Lemma~\ref{degmlk}. Lastly we shall prove that $p_{3,n}^{(i)}$ and $p_{3,n'}^{(j)}$ are mutually coprime if $(i,n)\neq (j,n')$. It is sufficient to investigate the case of $n=n'$, since, if $n\neq n'$, there exists at least one variable that the two iterates $p_{3,n}^{(i)}$ and $p_{3,n'}^{(j)}$ do not share. From the definition of $p_{t,n}^{(i)}$, there exist Laurent monomials $h$ and $h'$ such that
\begin{gather*}
h p_{3,n}^{(i)} = \tau_{2,n+1}^m \tau_{2,n-1}^l -\zeta^{2i-1}\tau_{2,n}^k, \qquad
h' p_{3,n}^{(j)} = \tau_{2,n+1}^m \tau_{2,n-1}^l -\zeta^{2j-1} \tau_{2,n}^k.
\end{gather*}
Therefore
\begin{gather*}
h p_{3,n}^{(i)} - h' p_{3,n}^{(j)} =\big(\zeta^{2j-1}-\zeta^{2i-1}\big) \tau_{2,n}^k.
\end{gather*}
Here $\zeta^{2i-1}-\zeta^{2j-1}\neq 0$. Let us suppose that $p_{3,n}^{(i)}$ and $p_{3,n}^{(j)}$ are not mutually coprime. Since they are both irreducible, $\tau_{2,n}^k$ must be divisible by both of them. From the factorization of $\tau_{2,n}^k$, there exists an integer $i'$ such that $p_{3,n}^{(i)}=p_{2,n}^{(i')}$, which contradicts the induction hypothesis.

{\bf The case of $\boldsymbol{t\ge 4}$.} Let us remind ourselves that $\tau_{t,n}$ is coprime with every element in $\{\tau_{t-1,n}, \tau_{t-2,n}\}_{n\in\mathbb{Z}}$ from Lemma~\ref{coprimethree}.

Let us prove that
\begin{gather}
\Omega_{R_0}(\tau_{t,n}) = r. \label{tautn}
\end{gather}
First we prove that $\Omega_{R_0}(\tau_{t,n}) \ge r$. We have
\begin{gather*}
\tau_{t,n}=q_{t-2,n}\prod_{j=1}^r \frac{ \tau_{t-1,n+1}^m \tau_{t-1,n-1}^l-\zeta^{2j-1} \tau_{t-1,n}^k }{ p_{t-2,n}^{(r-j+1)} }.
\end{gather*}
From Lemma~\ref{dividelemma}, we have
\begin{gather*}
P_j:=\frac{ \tau_{t-1,n+1}^m \tau_{t-1,n-1}^l-\zeta^{2j-1} \tau_{t-1,n}^k }{ p_{t-2,n}^{(r-j+1)} }\in R_0.
\end{gather*}
From Lemma~\ref{degmlk}, $P_j$ is not a unit in $R_0$. Thus we have
\begin{gather*}
\Omega_{R_0}(\tau_{t,n})\ge r.
\end{gather*}
It is trivial that $\Omega_{R_{t-2}}(\tau_{t,n})=r$. Since we have
\begin{gather*}
R_0 \big[\big\{\tau_{t-2,n}^{-1}, \tau_{t-1,n}^{-1}\big\}_{n\in\mathbb{Z}}\big]=R_{t-2} \big[\big\{\tau_{0,n}^{-1}, \tau_{1,n}^{-1}\big\}_{n\in\mathbb{Z}}\big],
\end{gather*}
from the Laurent property of every iterate, we obtain
\begin{gather*}
\Omega_{R_0 [\{\tau_{t-2,n}^{-1}, \tau_{t-1,n}^{-1}\}_{n\in\mathbb{Z}}]}(\tau_{t,n}) \le r.
\end{gather*}
By using Lemma~\ref{UFDlemma}, we have
\begin{gather*}
\Omega_{R_0}(\tau_{t,n}) \le r.
\end{gather*}
Therefore we have proved the equality \eqref{tautn}.

From these observations, we conclude that the prime element decomposition of $\tau_{t,n}$ is of the following form:
\begin{gather*}
\tau_{t,n}=u\times \prod_{j=1}^r P_j ,
\end{gather*}
where $u$ is a unit element in $R_0$. The term $p_{t,n}^{(j)}$ is obtained by taking the numerator of $P_j$. The degree $\deg p_{t,n}^{(j)}$ is independent of $j$ from Lemma~\ref{degmlk}.

Lastly we shall prove that $p_{t,n}^{(j)}$ and $p_{s,n'}^{(i)}$ are coprime if $(t,n,j)\neq (s,n',i)$. First, if $2\le t\neq s$, two iterates are coprime since $\deg p_{t,n}^{(j)} \neq \deg p_{s,n'}^{(i)}$. Let us show that these two factors are coprime when $t=s$ and $(j,n)\neq (i,n')$. From the construction of $p_{t,n}^{(j)}$, there exist two Laurent polynomials $h,h'\in R_0$ such that
\begin{gather*}
h p_{t,n}^{(j)} = \tau_{t-1,n+1}^m \tau_{t-1,n-1}^l -\zeta^{2j-1} \tau_{t-1,n}^k, \qquad h' p_{t,n'}^{(i)} = \tau_{t-1,n'+1}^m \tau_{t-1,n'-1}^l -\zeta^{2i-1} \tau_{t-1,n'}^k.
\end{gather*}
If $n\neq n'$, there exists at least one variable that the two iterates $p_{t,n}^{(j)}$ and $p_{t,n'}^{(i)}$ do not share. Since the two iterates are both irreducible, they must be coprime. If $n=n'$, we have
\begin{gather*}
h p_{t,n}^{(j)} - h' p_{t,n}^{(i)} = \big( \zeta^{2i-1}-\zeta^{2j-1} \big) \tau_{t-1,n}^k.
\end{gather*}
Thus the coprimeness of the two iterates are obtained from the discussion same as in the case of $t=3$.
\end{proof}

\section{More examples}
From here on we introduce two more examples without the irreducibility but having the coprimeness property.
\subsection{Coprimeness-preserving Somos-4}
By a reduction from the $1$-dimensional CP Toda lattice equation \eqref{cp1dtoda-prev}, we obtain the following recurrence
\begin{gather}
x_n=\frac{x_{n-1}^{rm} x_{n-3}^{rl}+x_{n-2}^{rk}}{x_{n-4}}, \label{CPSomos4}
\end{gather}
which can be interpreted as the extension to the Somos-$4$ recurrence \cite{Gale}:
\begin{gather}
z_n z_{n-4}=z_{n-1} z_{n-3} + z_{n-2}^2. \label{somos4}
\end{gather}
Here, $r\ge 2$ and $\mathrm{GCD}(m,l,k)=1$. It is already obtained that the original Somos-$4$ has the Laurent property, the irreducibility and the coprimeness~\cite{FZ, dKdVSC2}. Each iterate $z_n$ of~\eqref{somos4} is proved to be in $\mathbb{Z}\big[z_0^{\pm},z_1^{\pm},z_2^{\pm},z_3^{\pm}\big]$ using a technique of the cluster algebras~\cite{FZ}. Moreover, the irreducibility and the coprimeness are proved in~\cite{dKdVSC2}. Applying the techniques introduced in this article to equation \eqref{CPSomos4}, we have the coprimeness property even when the irreducibility is no longer satisfied. Let us express $x_n=p_n/q_n$, where~$p_n$ is a polynomial, $q_n$ is a monic monomial and~$p_n$ and~$q_n$ are mutually coprime.
\begin{Theorem}\quad
\begin{enumerate}\itemsep=0pt
\item[$1.$] For $n\ge 4$, the numerator $p_n$ is factorized into $r$ non-unit irreducible polynomials in $\mathbb{C}[x_0,x_1,x_2,x_3]$ as
\begin{gather*}
p_n=p_n^{(1)}p_n^{(2)}\cdots p_n^{(r)}.
\end{gather*}
\item[$2.$]
\begin{gather*}
\deg p_n^{(1)}=\deg p_n^{(2)}=\cdots =\deg p_n^{(r)}=\frac{\deg p_n}{r}, \qquad n\ge 4.
\end{gather*}
\item[$3.$] By appropriately rearranging the order of $p_{n}^{(1)},\dots,p_{n}^{(r)}$, we have
\begin{gather*}
p_{n}^{(j)}\, \big|\, \big(x_{n-1}^{m} x_{n-3}^{l} - \zeta^{2j-1} x_{n-2}^{k}\big), \qquad n\ge 4,
\end{gather*}
in $R:=\mathbb{C}\big[x_0^{\pm}, x_1^{\pm}, x_2^{\pm}, x_3^{\pm}\big]$ for every $j$.
\item[$4.$] $p_n^{(j)}$ and $p_m^{(i)}$ coprime as polynomials if and only if $(n,j)\neq (m,i)$.
\end{enumerate}
\end{Theorem}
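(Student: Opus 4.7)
The plan is to induct on $n$, following the proof of Theorem~\ref{mainthm} as closely as possible, with the one-dimensional recurrence \eqref{CPSomos4} playing the role of~\eqref{cp1dtoda}. The key algebraic identity is
\begin{gather*}
x_{n} x_{n-4} = x_{n-1}^{rm} x_{n-3}^{rl} + x_{n-2}^{rk} = \prod_{j=1}^{r} \bigl( x_{n-1}^{m} x_{n-3}^{l} - \zeta^{2j-1} x_{n-2}^{k} \bigr),
\end{gather*}
which exhibits $r$ candidate irreducible factors of $p_n$. The base cases $n = 4, 5, 6, 7$ would be handled by direct computation, exactly as in the $t = 2, 3$ analysis of the previous section, since the inductive step reaches back four levels. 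Throughout, I would work in $R = \mathbb{C}[x_0^\pm, x_1^\pm, x_2^\pm, x_3^\pm]$ and use the $\Omega_R$ technology from Appendix~\ref{Append} together with Lemma~\ref{UFDlemma}.

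For the inductive step I would first establish a package of auxiliary lemmas mirroring Lemmas~\ref{dividelemma}--\ref{degmlk}. A divisibility lemma asserting that $p_{n-2}^{(r-j+1)}$ divides $x_{n-1}^{m} x_{n-3}^{l} - \zeta^{2j-1} x_{n-2}^{k}$ in $R$ is proved as in Lemma~\ref{dividelemma}: reduce each $x_{n-i}$ modulo $p_{n-2}^{(r-j+1)}$ using \eqref{CPSomos4}, and invoke the inductive hypothesis that $p_{n-2}^{(r-j+1)}$ divides $x_{n-3}^{m} x_{n-5}^{l} - \zeta^{2(r-j+1)-1} x_{n-4}^{k}$. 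A monotone degree lemma ($\deg p_{n+1} > \deg p_n > \deg q_n > 0$) together with an analog of Lemma~\ref{degmlk}, proved by a highest/lowest-term cancellation analysis, ensures both that each $P_j := (x_{n-1}^{m} x_{n-3}^{l} - \zeta^{2j-1} x_{n-2}^{k})/p_{n-2}^{(r-j+1)} \in R$ is a non-unit and that its degree is independent of~$j$, supplying parts~2 and~3.

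The hard part, as in the Toda case, is the analog of Lemma~\ref{coprimethree}: to close the $\Omega_R$ bound I need $x_n$ to be coprime with each of $x_{n-1}, x_{n-2}, x_{n-3}, x_{n-4}$. Granting this, the localization identity
\begin{gather*}
R \bigl[ x_{n-1}^{-1}, x_{n-2}^{-1}, x_{n-3}^{-1}, x_{n-4}^{-1} \bigr] = \mathbb{C}\bigl[ x_{n-4}^{\pm}, x_{n-3}^{\pm}, x_{n-2}^{\pm}, x_{n-1}^{\pm} \bigr]\bigl[ x_0^{-1}, x_1^{-1}, x_2^{-1}, x_3^{-1} \bigr]
\end{gather*}
and Lemma~\ref{UFDlemma} yield $\Omega_R(x_n) \le r$, which together with the lower bound $\Omega_R(x_n) \ge r$ from the $P_j$'s forces $\Omega_R(x_n) = r$ and identifies the $p_n^{(j)}$ with the numerators of the $P_j$. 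Establishing the required coprimeness of $x_n$ with the four previous iterates is the main obstacle; I would do it case by case, as in Lemma~\ref{coprimethree}, by choosing, for each irreducible factor $p_{n-i}^{(j)}$ of a would-be common divisor, an assignment of the initial variables $x_0, x_1, x_2, x_3$ that annihilates $p_{n-i}^{(j)}$ while keeping the numerator of $x_n$ non-zero. This relies on the fact that $p_{n-i}^{(j)}$ is a non-monomial polynomial in at least one initial variable that does not appear in the other iterates entering the formula for~$x_n$.

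Finally, part~4 follows from the same two-step argument as in Theorem~\ref{mainthm}. For $n \neq m$, strict degree growth gives $\deg p_n^{(j)} \neq \deg p_m^{(i)}$, so irreducibility forces coprimeness. For $n = m$ and $j \neq i$, the identity
\begin{gather*}
h\, p_n^{(j)} - h'\, p_n^{(i)} = \bigl( \zeta^{2i-1} - \zeta^{2j-1} \bigr) x_{n-2}^{k}
\end{gather*}
(for suitable Laurent monomials $h, h'$) shows that any common factor would have to divide $x_{n-2}^{k}$; the inductive factorization of $x_{n-2}$ would then force $p_n^{(j)} = p_{n-2}^{(i')}$ for some $i'$, contradicting degree monotonicity. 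The absence of monomial factors in the $p_n^{(j)}$ (via the Somos analog of Lemma~\ref{lemma53}(c)) upgrades coprimeness in $R$ to coprimeness in $\mathbb{C}[x_0, x_1, x_2, x_3]$.
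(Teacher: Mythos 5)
Your overall plan (induction on $n$, the $\Omega_R$ count, localization at the four preceding iterates, and a coprimeness lemma in the style of Lemma~\ref{coprimethree}) is exactly the route the paper intends, since the paper omits this proof and says only that it parallels Theorem~\ref{mainthm}; but your divisibility lemma carries a wrong index that breaks the core of the inductive step. In \eqref{CPSomos4} the iterate playing the role of the denominator $\tau_{t-2,n}$ of \eqref{cp1dtoda} (or of $y_{n-2}$ in \eqref{yrecurrence}) is $x_{n-4}$, not $x_{n-2}$, so the correct claim is $p_{n-4}^{(r-j+1)}\mid\big(x_{n-1}^m x_{n-3}^l-\zeta^{2j-1}x_{n-2}^k\big)$ and $P_j:=\big(x_{n-1}^m x_{n-3}^l-\zeta^{2j-1}x_{n-2}^k\big)/p_{n-4}^{(r-j+1)}$. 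Your version with $p_{n-2}^{(r-j+1)}$ is in fact false: modulo an irreducible factor of $p_{n-2}$ the term $\zeta^{2j-1}x_{n-2}^k$ vanishes and the bracket reduces to $x_{n-1}^m x_{n-3}^l$, which is nonzero there by the inductive coprimeness of $x_{n-2}$ with $x_{n-1}$, $x_{n-3}$; moreover with your $P_j$ the product $q_{n-4}\prod_j P_j$ does not reconstruct $x_n$, so both the lower bound $\Omega_R(x_n)\ge r$ and the identification of the $p_n^{(j)}$ as numerators of the $P_j$ collapse. The repaired lemma is proved exactly as Lemma~\ref{dividelemma}: modulo $p_{n-4}^{(r-j+1)}$ one has $x_{n-4}\equiv 0$, hence $x_{n-1}\equiv x_{n-3}^{rk}/x_{n-5}$, $x_{n-2}\equiv x_{n-3}^{rm}x_{n-5}^{rl}/x_{n-6}$, $x_{n-3}\equiv x_{n-5}^{rk}/x_{n-7}$, and the bracket becomes a Laurent monomial times $x_{n-5}^m x_{n-7}^l-\zeta^{2(r-j+1)-1}x_{n-6}^k$, to which property~3 at $n-4$ applies.

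The second gap is in your analog of Lemma~\ref{coprimethree}. You justify the specializations by the existence of an initial variable occurring in $p_{n-i}^{(j)}$ but not in the other iterates entering the formula for $x_n$; for \eqref{CPSomos4} no such variable exists, because there are only four initial variables $x_0,\dots,x_3$ and all of them appear in every iterate once $n$ is moderately large (the fresh variable $X=\tau_{1,n-t+3}$ in the Toda proof is available only because the lattice has infinitely many initial variables). The correct substitute is the argument used for \eqref{yrecurrence} in Section~\ref{section2}: rewrite $x_n$ modulo the candidate common factor and choose a point of $(\mathbb{C}^{\times})^4$ on the irreducible hypersurface $p_{n-i}^{(j)}=0$ that avoids the zero sets of the finitely many earlier iterates involved, which is possible by the inductive coprimeness and irreducibility; the hardest case, coprimeness of $x_n$ with the denominator iterate $x_{n-4}$, requires the analog of Lemma~\ref{lemmataut-4}, namely that $x_{n-1}^{rm}x_{n-3}^{rl}+x_{n-2}^{rk}$ is congruent, modulo $x_{n-4}^2$, to a Laurent monomial times $x_{n-4}x_{n-8}$, obtained from $x_{n-4}x_{n-8}=x_{n-5}^{rm}x_{n-7}^{rl}+x_{n-6}^{rk}$. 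A minor further slip: for $n\ge 8$ the cofactors $h$, $h'$ in your identity for property~4 are Laurent polynomials containing $p_{n-4}^{(\cdot)}$, not Laurent monomials, though this does not affect that part of the argument.
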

The proof is done in the same manner as in Theorem~\ref{mainthm}.

\subsection{Coprimeness-preserving two-dimensional Toda lattice}
Our last example in this paper is a generalization of the two-dimensional discrete Toda lattice equation
\begin{gather}
\tau_{t+1,n,m+1}\tau_{t-1,n+1,m}=\tau_{t,n+1,m}^{k_1}\tau_{t,n,m+1}^{k_2}+\tau_{t,n,m}^{l_1}\tau_{t,n+1,m+1}^{l_2},\qquad k_i, l_i \in \Z_+,
\label{pDToda_polinear_eq}
\end{gather}
which we studied in \cite{KKMT}.

\begin{Theorem} \label{factorizethm} Each iterate $\tau_{t,\n}$ of equation~\eqref{pDToda_polinear_eq} is a Laurent polynomial of the initial variables $\left\{\tau_{0, n,m},\,\tau_{1, n,m}\, |\, n,m\in\mathbb{Z}\right\}$. Moreover, every pair of the iterates is always co-prime.
\end{Theorem}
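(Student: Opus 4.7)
The plan is to adapt the strategy of Theorem~\ref{mainthm} to the two-dimensional lattice setting. Let $R_0 := \mathbb{C}[\{\tau_{0,n,m}^{\pm},\tau_{1,n,m}^{\pm}\}_{n,m\in\mathbb{Z}}]$ and write each iterate as $\tau_{t,\n}=p_{t,\n}/q_{t,\n}$ with $p_{t,\n}$ a polynomial and $q_{t,\n}$ a monic monomial sharing no factor with $p_{t,\n}$. First I would establish the Laurent property by induction on $t$: solving~\eqref{pDToda_polinear_eq} for $\tau_{t+1,n,m+1}$ produces a rational expression whose divisibility by $\tau_{t-1,n+1,m}$ is verified modulo each irreducible factor of $\tau_{t-1,n+1,m}$ by the substitute-and-collapse technique of Lemma~\ref{dividelemma}: applying~\eqref{pDToda_polinear_eq} one step further to rewrite each $\tau_{t,\cdot}$ on the right-hand side in terms of time $t-1$ and $t-2$ data collapses the numerator to a multiple of the chosen factor modulo itself, hence to zero.

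Next, setting $g:=\mathrm{GCD}(k_1,k_2,l_1,l_2)$ and $k_i=gk_i'$, $l_i=gl_i'$, the right-hand side of~\eqref{pDToda_polinear_eq} factorizes as
\[
\prod_{j=1}^{g}\bigl(\tau_{t,n+1,m}^{k_1'}\tau_{t,n,m+1}^{k_2'}-\zeta^{2j-1}\tau_{t,n,m}^{l_1'}\tau_{t,n+1,m+1}^{l_2'}\bigr),\qquad \zeta := \exp\bigl(\sqrt{-1}\pi/g\bigr),
\]
and each factor is irreducible as a polynomial in the four independent variables. I would then prove by induction on $t$ a four-part structural statement analogous to Theorem~\ref{mainthm}: $p_{t,\n}$ decomposes into exactly $g$ non-unit irreducible factors $p_{t,\n}^{(1)},\dots,p_{t,\n}^{(g)}$ of equal degree, with $p_{t,\n}^{(j)}$ obtained as the numerator of the $j$-th factor above divided by an appropriately shifted $p_{t-2,\boldsymbol{n}'}^{(g-j+1)}$. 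The count $\Omega_{R_0}(\tau_{t,\n})=g$ is established by the two-sided bound used for Theorem~\ref{mainthm}: the lower bound comes from the $g$-fold product together with a two-dimensional analog of Lemma~\ref{degmlk} ensuring each factor is a non-unit, while the upper bound follows from localizing at the neighboring time $t-1$ and $t-2$ iterates (using the equality $R_0[\{\tau_{t-2,\cdot}^{-1},\tau_{t-1,\cdot}^{-1}\}]=R_{t-2}[\{\tau_{0,\cdot}^{-1},\tau_{1,\cdot}^{-1}\}]$) and then applying Lemma~\ref{UFDlemma} once coprimeness of $\tau_{t,\n}$ with its immediate past is in hand.

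The coprimeness of two distinct iterates then reduces to pairwise distinctness of the $p_{t,\n}^{(j)}$ as irreducibles. When the times differ ($t\neq s$), the strictly monotone growth of $\deg p_{t,\n}$ with $t$ (the 2D analog of Lemmas~\ref{lemmapqdeg} and~\ref{lemmadegtau}) rules out any common factor. When $t=s$ but $\n\neq\boldsymbol{n}'$, the two iterates depend on disjoint corner initial variables in their light-cones, so no irreducible polynomial can belong to both. When $(t,\n)=(s,\boldsymbol{n}')$ and $i\neq j$, choosing Laurent monomials $h,h'$ so that $hp_{t,\n}^{(j)}$ and $h'p_{t,\n}^{(i)}$ equal the corresponding factors in the product above, their difference equals $(\zeta^{2i-1}-\zeta^{2j-1})\tau_{t-1,n,m}^{l_1'}\tau_{t-1,n+1,m+1}^{l_2'}$, forcing any putative common factor to divide a strictly earlier iterate and contradicting the induction hypothesis.

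The hard part will be the two-dimensional analog of Lemma~\ref{coprimethree}: the coprimeness of $\tau_{t,\n}$ with every iterate at times $t-1$ and $t-2$ appearing in its dependency cone. In two spatial dimensions this cone spreads in both directions, so the list of neighbors to check is considerably larger than~\eqref{eightiterate}, and for each one must exhibit a specialization of the initial variables making the candidate common irreducible factor vanish while keeping $\tau_{t,\n}$ non-zero. The corner-of-cone argument from the one-dimensional proof extends conceptually, but the combinatorial bookkeeping of which boundary initial variable each $p_{t-2,\boldsymbol{n}'}^{(j)}$ genuinely involves is where the delicate work lies, and carrying this out uniformly across the widened neighborhood is the step that will drive the complexity of the full proof.
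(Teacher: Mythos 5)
Your proposal takes essentially the same route as the paper: the authors omit the proof of Theorem~\ref{factorizethm} precisely because it is ``almost the same'' as that of Theorem~\ref{mainthm}, and your plan reproduces that adaptation faithfully --- factoring the right-hand side into $g=\mathrm{GCD}(k_1,k_2,l_1,l_2)$ irreducible binomial factors with $\zeta=\exp\big(\sqrt{-1}\pi/g\big)$, counting prime factors via $\Omega$ and Lemma~\ref{UFDlemma} after localizing at the time $t-1$ and $t-2$ iterates, using the analogues of Lemmas~\ref{dividelemma} and~\ref{degmlk}, and handling same-time distinct factors by the difference-of-factors trick. You also correctly flag the genuinely delicate step, namely the two-dimensional analogue of Lemma~\ref{coprimethree} over the wider dependency cone, which is exactly the part the paper's one-dimensional proof spends most of its effort on.
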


\looseness=-1 Proof is omitted here because the discussion is almost the same as our main Theorem~\ref{mainthm} for the one-dimensional case. We already showed this theorem under the condition that the right hand side of the equation is not factorizable. Note that the irreducibility of $P^{k_1}Q^{k_2}+R^{l_1}S^{l_2}$ in $\mathbb{Z}[P,Q,R,S]$ is equivalent to the condition that $\mathrm{GCD}(k_1,k_2,l_1,l_2)\neq 2^k$ with $k\ge 0$. Even if the irreducibility is not satisfied in Theorem~\ref{factorizethm}, the Laurent property and the coprimeness still hold.

\section{Conclusion}
In this paper we studied the coprimeness property of several discrete dynamical systems. We first explained our motivation using a simple recurrence relation with coprimeness property but without the irreducibility. Then, we proved the coprimeness property of an extension to the one-dimensional discrete Toda equation when the equation is factorizable. Finally we stated without proof the coprimeness property of extensions to the Somos-$4$ equation and the two-dimensional discrete Toda equation. In these examples, each iterate factorizes in exactly the same manner as the defining equation itself. In such cases it is possible to investigate the coprimeness property by following the factorization and formulating the evolution of each factor. The examples in this paper have the coprimeness property even if the equations themselves are factorizable and thus their iterates do not have the irreducibility property. Therefore, the coprimeness, which is based on a singularity analysis related to the singularity confinement test, can be individually investigated, while in our previous works the coprimeness was always paired with the irreducibility property.

\appendix

\section{Basic facts on unique factorization domains} \label{Append}
\begin{Definition}
Let $f\in\mathbb{C}\big[X_1^{\pm},X_2^{\pm},\dots \big]$ be a Laurent polynomial. Let us decompose $f$ as $f=gh$, where $g$ is a monic Laurent monomial, $h$ is a polynomial without any monomial factor. The degree of~$f$ as a Laurent polynomial is defined as
\begin{gather*}
\deg_L f:=\deg h.
\end{gather*}
\end{Definition}

\begin{Definition}
Let $R$ be a unique factorization domain. Let us factorize a non-zero element~$f$ in~$R$ into prime elements $p_i\in R$ as
\begin{gather*}
f=u p_1^{e_1}p_2^{e_2}\cdots p_m^{e_m},
\end{gather*}
where $u$ is a unit in $R$, and $e_i$ is a positive integer. We define the function $\Omega_R$ as
\begin{gather*}
\Omega_R(f)=e_1+\cdots +e_m,
\end{gather*}
which we will call the total number of prime elements of $f$ in $R$.
\end{Definition}
\begin{Lemma}\label{UFDlemma} Let $R$ be a unique factorization domain, and let us take two non-zero elements $f,g\in R$. Then we have
\begin{gather*}
\Omega_{R[g^{-1}]}(f)\le \Omega_R (f),
\end{gather*}
where the equality is satisfied if and only if $f$ and $g$ are mutually coprime in $R$.
\end{Lemma}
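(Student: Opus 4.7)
The plan is to reduce the lemma to the classical behavior of prime factorizations under localization at a single element. The main observation is that $R[g^{-1}]$ is again a unique factorization domain (localizations of a UFD at a multiplicative set are UFDs), so $\Omega_{R[g^{-1}]}$ is well-defined. The whole problem is then to track what happens to each prime factor of $f$ when we pass from $R$ to $R[g^{-1}]$.

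First I would establish the dictionary between primes of $R$ and primes of $R[g^{-1}]$. Write the prime factorization in $R$ as
\begin{gather*}
g = v\, q_1^{f_1}\cdots q_s^{f_s},\qquad f = u\, p_1^{e_1}\cdots p_m^{e_m}.
\end{gather*}
A prime element $p\in R$ becomes a unit in $R[g^{-1}]$ precisely when $p$ divides some power of $g$, which (since $R$ is a UFD) is equivalent to $p$ being associate to one of the $q_j$; all other primes of $R$ remain non-unit primes in $R[g^{-1}]$, and every prime of $R[g^{-1}]$ is associate to one of these surviving primes. This is the standard description of $\operatorname{Spec}(R[g^{-1}])\subset\operatorname{Spec}(R)$ as the complement of $V(g)$, and I would justify it with a one-line argument using the universal property of localization.

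Applying this dictionary to the factorization of $f$, the image of $f$ in $R[g^{-1}]$ factors as
\begin{gather*}
f = u' \prod_{p_i\nmid g} p_i^{e_i},
\end{gather*}
where $u'$ absorbs the $p_i$'s associate to some $q_j$ together with the original unit $u$. Hence
\begin{gather*}
\Omega_{R[g^{-1}]}(f) = \sum_{p_i\nmid g} e_i \le \sum_{i=1}^{m} e_i = \Omega_R(f),
\end{gather*}
giving the inequality.

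For the equality case, note that equality holds iff the sum on the left includes every index $i$, i.e., iff no prime factor $p_i$ of $f$ divides $g$. In a UFD this is exactly the condition that $f$ and $g$ share no common irreducible factor, i.e., that $f$ and $g$ are coprime in $R$. I would spell out both directions briefly: if some $p_i\mid g$ then at least $e_i$ units of count are lost, so the inequality is strict; conversely if $f$ and $g$ are coprime then none of the $p_i$ is associate to any $q_j$, so all $e_i$ survive. No step is really an obstacle here — the only thing to be careful about is phrasing the prime-correspondence cleanly, since if one tries to prove it by hand rather than quoting the standard result it can look clunkier than it should.
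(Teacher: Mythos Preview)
Your proof is correct and follows essentially the same approach as the paper's: factor $f$ into primes in $R$, observe that exactly those prime factors dividing $g$ become units in $R[g^{-1}]$ while the rest remain prime, and read off both the inequality and the equality condition from the surviving exponent sum. The paper's version is terser (it simply reorders the $p_i$ so that the first $r$ do not divide $g$ and absorbs the rest into a unit $v$), but the argument is the same.
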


\begin{proof}
Let us factorize $f$ into prime elements $p_i\in R$ as
\begin{gather*}
f=u p_1^{e_1}p_2^{e_2}\cdots p_m^{e_m},
\end{gather*}
where $u$ is a unit in $R$, and $e_i$ is a positive integer. Let us rearrange the order of the terms $p_1,\dots, p_m$ so that there exists an integer $r$ such that $p_i\mathrel{\not |} g$, $1\le i\le r$, and $p_j\, |\, g$, $r+1\le j\le m$. Then the factorization of~$f$ in the localized ring $R\big[g^{-1}\big]$ is
\begin{gather*}
f=vp_1^{e_1}p_2^{e_2}\cdots p_r^{e_r},
\end{gather*}
where $v=u p_{r+1}^{e_{r+1}}\cdots p_m^{e_m}$ is a unit in $R[g^{-1}]$.
\end{proof}

\subsection*{Acknowledgements}
We thank the referees for reminding us several important papers regarding the Laurent systems and the discrete integrability. We acknowledge support from KAKENHI Grant numbers 26400109, 16H06711 and 17K14211.

\pdfbookmark[1]{References}{ref}
\LastPageEnding


\begin{thebibliography}{99}
\footnotesize\itemsep=0pt

\bibitem{BV}
Bellon M.P., Viallet C.-M., Algebraic entropy, \href{https://doi.org/10.1007/s002200050652}{\textit{Comm. Math. Phys.}}
 \textbf{204} (1999), 425--437, \href{https://arxiv.org/abs/chao-dyn/9805006}{chao-dyn/9805006}.

\bibitem{Beukers}
Beukers F., On a sequence of polynomials, \href{https://doi.org/10.1016/S0022-4049(97)00006-6}{\textit{J.~Pure Appl. Algebra}}
 \textbf{117/118} (1997), 97--103.

\bibitem{BRSTZ}
Bruschi M., Ragnisco O., Santini P.M., Tu G.Z., Integrable symplectic maps,
 \href{https://doi.org/10.1016/0167-2789(91)90149-4}{\textit{Phys.~D}} \textbf{49} (1991), 273--294.

\bibitem{DV}
Demskoi D.K., Viallet C.-M., Algebraic entropy for semi-discrete equations,
 \href{https://doi.org/10.1088/1751-8113/45/35/352001}{\textit{J.~Phys.~A: Math. Theor.}} \textbf{45} (2012), 352001, 10~pages,
 \href{https://arxiv.org/abs/1206.1214}{arXiv:1206.1214}.

\bibitem{FZ}
Fomin S., Zelevinsky A., The {L}aurent phenomenon, \href{https://doi.org/10.1006/aama.2001.0770}{\textit{Adv. in Appl. Math.}}
 \textbf{28} (2002), 119--144, \href{https://arxiv.org/abs/math.CO/0104241}{math.CO/0104241}.

\bibitem{GP}
Galashin P., Pylyavskyy P., $R$-systems, \href{https://arxiv.org/abs/1709.00578}{arXiv:1709.00578}.

\bibitem{Gale}
Gale D., The strange and surprising saga of the {S}omos sequences,
 \textit{Math. Intelligencer} \textbf{13} (1991), 40--42.

\bibitem{Glick}
Glick M., The {D}evron property, \href{https://doi.org/10.1016/j.geomphys.2014.07.029}{\textit{J.~Geom. Phys.}} \textbf{87} (2015),
 161--189, \href{https://arxiv.org/abs/1312.6881}{arXiv:1312.6881}.

\bibitem{SC}
Grammaticos B., Ramani A., Papageorgiou V., Do integrable mappings have the
 {P}ainlev\'e property?, \href{https://doi.org/10.1103/PhysRevLett.67.1825}{\textit{Phys. Rev. Lett.}} \textbf{67} (1991),
 1825--1828.

\bibitem{HHKQ}
Hamad K., Hone A.N.W., van~der Kamp P.H., Quispel G.R.W., Q{RT} maps and
 related {L}aurent systems, \href{https://doi.org/10.1016/j.aam.2017.12.006}{\textit{Adv. in Appl. Math.}} \textbf{96} (2018),
 216--248, \href{https://arxiv.org/abs/1702.07047}{arXiv:1702.07047}.

\bibitem{HK}
Hamad K., van~der Kamp P.H., From discrete integrable equations to {L}aurent
 recurrences, \href{https://doi.org/10.1080/10236198.2016.1142980}{\textit{J.~Difference Equ. Appl.}} \textbf{22} (2016), 789--816.

\bibitem{HV}
Hietarinta J., Viallet C.-M., Singularity confinement and chaos in discrete
 systems, \href{http://dx.doi.org/10.1103/PhysRevLett.81.325}{\textit{Phys. Rev. Lett.}} \textbf{81} (1998), 325--328,
 \href{https://arxiv.org/abs/solv-int/9711014}{solv-int/9711014}.

\bibitem{HV2}
Hietarinta J., Viallet C.-M., Searching for integrable lattice maps using
 factorization, \href{https://doi.org/10.1088/1751-8113/40/42/S09}{\textit{J.~Phys.~A: Math. Theor.}} \textbf{40} (2007),
 12629--12643, \href{https://arxiv.org/abs/0705.1903}{arXiv:0705.1903}.

\bibitem{KKMT}
Kamiya R., Kanki M., Mase T., Tokihiro T., Coprimeness-preserving
 non-integrable extension to the two-dimensional discrete {T}oda lattice
 equation, \href{https://doi.org/10.1063/1.4973744}{\textit{J.~Math. Phys.}} \textbf{58} (2017), 012702, 12~pages,
 \href{https://arxiv.org/abs/1610.02646}{arXiv:1610.02646}.

\bibitem{dKdVSC2}
Kanki M., Mada J., Mase T., Tokihiro T., Irreducibility and co-primeness as an
 integrability criterion for discrete equations, \href{https://doi.org/10.1088/1751-8113/47/46/465204}{\textit{J.~Phys.~A: Math.
 Theor.}} \textbf{47} (2014), 465204, 15~pages, \href{https://arxiv.org/abs/1405.2229}{arXiv:1405.2229}.

\bibitem{dKdVSC1}
Kanki M., Mada J., Tokihiro T., Singularities of the discrete {K}d{V} equation
 and the {L}aurent property, \href{https://doi.org/10.1088/1751-8113/47/6/065201}{\textit{J.~Phys.~A: Math. Theor.}} \textbf{47}
 (2014), 065201, 12~pages, \href{https://arxiv.org/abs/1311.0060}{arXiv:1311.0060}.

\bibitem{exhv}
Kanki M., Mase T., Tokihiro T., Algebraic entropy of an extended
 {H}ietarinta--{V}iallet equation, \href{https://doi.org/10.1088/1751-8113/48/35/355202}{\textit{J.~Phys.~A: Math. Theor.}}
 \textbf{48} (2015), 355202, 19~pages, \href{https://arxiv.org/abs/1502.02415}{arXiv:1502.02415}.

\bibitem{2d}
Kanki M., Mase T., Tokihiro T., Singularity confinement and chaos in
 two-dimensional discrete systems, \href{https://doi.org/10.1088/1751-8113/49/23/23LT01}{\textit{J.~Phys.~A: Math. Theor.}}
 \textbf{49} (2016), 23LT01, 9~pages, \href{https://arxiv.org/abs/1512.09168}{arXiv:1512.09168}.

\bibitem{maserims}
Mase T., The {L}aurent phenomenon and discrete integrable systems, in The
 Breadth and Depth of Nonlinear Discrete Integrable Systems, \textit{RIMS
 K\^oky\^uroku Bessatsu}, Vol.~B41, Res. Inst. Math. Sci. (RIMS), Kyoto, 2013,
 43--64.

\bibitem{mase}
Mase T., Investigation into the role of the {L}aurent property in
 integrability, \href{https://doi.org/10.1063/1.4941370}{\textit{J.~Math. Phys.}} \textbf{57} (2016), 022703, 21~pages,
 \href{https://arxiv.org/abs/1505.01722}{arXiv:1505.01722}.

\bibitem{anticonfinement}
Mase T., Willox R., Ramani A., Grammaticos B., Integrable mappings and the
 notion of anticonfinement, \href{https://doi.org/10.1088/1751-8121/aac578}{\textit{J.~Phys.~A: Math. Theor.}} \textbf{52}
 (2018), 265201, 11~pages, \href{https://arxiv.org/abs/1511.02000}{arXiv:1511.02000}.

\bibitem{Mirimanoff}
Mirimanoff D., Sur l'\'{e}quation $(x+1)^l-x^l-1=0$, \textit{Nouv. Ann. Math}
 \textbf{3} (1903), 385--397.

\bibitem{DP}
Ramani A., Grammaticos B., Hietarinta J., Discrete versions of the {P}ainlev\'e
 equations, \href{https://doi.org/10.1103/PhysRevLett.67.1829}{\textit{Phys. Rev. Lett.}} \textbf{67} (1991), 1829--1832.

\bibitem{Viallet}
Viallet C.-M., On the algebraic structure of rational discrete dynamical
 systems, \href{https://doi.org/10.1088/1751-8113/48/16/16FT01}{\textit{J.~Phys.~A: Math. Theor.}} \textbf{48} (2015), 16FT01,
 21~pages, \href{https://arxiv.org/abs/1501.06384}{arXiv:1501.06384}.

\end{thebibliography}
\end{document}